\newcommand{\cX}{\mathcal{X}}
\newcommand{\cY}{\mathcal{Y}}
\newcommand{\cS}{\mathcal{S}}
\newcommand{\cK}{\mathcal{K}}
\newcommand{\bP}{\mathbf{P}}
\newcommand{\bR}{\mathbb{R}}
\newcommand{\fRe}{\mathfrak{R}}
\newcommand{\fIm}{\mathfrak{I}}
\newcommand{\bX}{\mathbf{X}}
\newcommand{\bx}{\mathbf{x}}
\newcommand{\cN}{\mathcal{N}}
\newcommand{\cCN}{\mathcal{CN}}
\newcommand{\cH}{\mathcal{H}}
\newcommand{\varNoise}{\sigma^2_W}
\newtheorem{theorem}{Theorem}
\newtheorem{definition}{Definition}
\newtheorem{lemma}{Lemma}
\begin{document}
% paper title
\title{Exploiting Channel Diversity in Secret Key Generation from Multipath Fading Randomness}
\pagenumbering{arabic}

\author{Yanpei Liu,~\IEEEmembership{Student Member,~IEEE,}
	 Stark~C.~Draper,~\IEEEmembership{Member,~IEEE,}
  	Akbar M.~Sayeed,~\IEEEmembership{Fellow,~IEEE} \thanks{Copyright (c) $2012$ IEEE. Personal use of this material is permitted. However, permission to use this material for any other purposes must be obtained from the IEEE by sending a request to pubs-permissions@ieee.org. }\thanks{This
    work was presented in part at the 45th annual Conference on
    Information Sciences and Systems (CISS), Baltimore MD, March
    2011.}  \thanks{The authors are with the Dept.~of Electrical and
    Computer Engineering, University of Wisconsin, Madison, WI 53706
    (E-mail: \{yliu73@wisc.edu, sdraper@ece.wisc.edu,
    akbar@engr.wisc.edu\}).}  \thanks{Y.~Liu and S.~C.~Draper were supported by the National Science Foundation under grant CCF-0963834 and by a grant from the Wisconsin Alumni Research Foundation.}}

\maketitle
\pagenumbering{arabic}

\begin{abstract}
We design and analyze a method to extract secret keys from the
randomness inherent to wireless channels. We study a channel model for
multipath wireless channel and exploit the channel diversity in generating secret key bits. We compare the key extraction methods based both on entire channel state information (CSI) and on single channel parameter such as the
received signal strength indicators (RSSI).  Due to the reduction in
the degree-of-freedom when going from CSI to RSSI, the rate of key
extraction based on CSI is far higher than that based on RSSI.  This
suggests that exploiting channel diversity and making CSI information available to higher layers would
greatly benefit the secret key generation.  We propose a key generation
system based on low-density parity-check (LDPC) codes and describe the
design and performance of two systems: one based on binary
LDPC codes and the other (useful at higher signal-to-noise ratios)
based on four-ary LDPC codes. 
\end{abstract}

\begin{keywords}
Common randomness, secret key generation, channel diversity, LDPC codes, Slepian-Wolf decoder
\end{keywords}

\section{Introduction}
\label{sec.introduction}

%% Yanpei, commented this out because it wasn't clear to me that our
%% technique addresses either of the failures of the current protocols
%% that you cite: denial-of-service or power.  If you have arguments
%% as to why our approach is superior in these dimensions then we
%% should add in those details.

%Current wireless communication security protocols
%are largely based on public key cryptography, such as Wired Equivalent
%Privacy (WEP), Extensible Authentication Protocol (EAP) and Wi-Fi
%Protected Access (WPA). However, the security of these techniques have
%been of great concern in recent years. For example, an attacker can
%cause a denial-of-service attack in a network equipped with
%WPA~\cite{Hao}. Also, recent research show that public key
%cryptography consumes a significant amount of computing resources and
%power. This places a significant load on the resources of circuitry of
%small-scale, especially battery-powered networks~\cite{Suman}.

In this paper we study the generation of secret keys based on the
inherent randomness of wireless multipath channels.  This study falls
into the broad area of physical layer security (see~\cite{Liang09} for
an overview of the area). In this setting the objective is for a pair
of users, generically referred to as {\em Alice} and {\em Bob}, to
extract a secret key from a naturally occurring source of randomness
observed by two users.  The central idea is that through a public
(i.e., {\em not} secret) discussion, Alice and Bob can de-noise their
correlated observations to generate, with high probability, a commonly
known string, which can serve as the key.  Of course, any eavesdropper
(typically named {\em Eve}) would use both her knowledge of the public
message and any observation she has to guess the key.  A source of
naturally occurring randomness that would be well suited to the key
generation application would be characterized by three properties.  It
would be easily and widely accessible, it would have a high level of
randomness, and it would be difficult for Eve to observe.  The
randomness inherent to wireless multipath fading channels, such as the
random amplitudes and phases of the channel response coefficients,
satisfies all three properties.

The ubiquity of personal wireless devices makes a multipath fading
channel an easily accessible, and hence very relevant, source of
randomness.  The fact that it has a high level of randomness and is
difficult to eavesdrop is due to the physics of electromagnetic wave
propagation.  In a rich multipath environment wireless channels have
high spatial and temporal variation.  For instance, whenever either
Alice or Bob moves, or whenever other scattering objects move between
them, the channel between them changes.  In terms of key extraction
this means that there is a continual influx of new randomness from
which to extract new and independent key bits.  For the same reason, an
eavesdropper that is listening on transmission between Alice and
Bob and that is even a few wavelengths away from either will observe a
nearly independent channel.  In terms of key extraction, this makes it
difficult to eavesdrop on the source of randomness (the channel
coefficients).

Modern wireless communication protocols typically use diversity signaling techniques 
such as orthogonal frequency-division multiplexing (OFDM) or Multiple Input Multiple Output (MIMO) antennas. These techniques exploit frequency, time and spatial diversity of the underlying wireless channel and improve the communication performance. By exploiting channel diversity in a similar manner in secret key generation one can harvest more randomness. Thus in this paper, we study an OFDM system as an example from the perspective of
secret key generation.  We characterize the suitableness of such
channels for key generation, both under the assumption of the
availability of full channel state information (CSI) and the
assumption of the availability of only received signal strength
indicators (RSSI). The latter is what is available to the higher
layers of existing wireless transceivers. We show that by exploiting the channel diversity in the CSI, one can significantly increase the rate at which the secret key bits can be generated relative to when channel diversity is not exploited (such as RSSI based method). Thus making CSI
available to the higher layers (where security is managed) in future
transceiver designs would greatly facilitate the adoption of the approach we propose.  We
also show that when extracting keys from CSI, one can, without loss of
rate, extracts key bits separately from the real and the imaginary
parts of each channel coefficient.  The same is not true for amplitude and
phase as there is correlation between the amplitudes and phases {\em
  across} two participating users.  We also detail an algorithm of the de-noising
needed in key extracted.  Our algorithm is based on low-density
parity-check (LDPC) codes.  We describe two designs.  One based
on binary and one based on non-binary (quaternary) LDPC codes.
Higher-alphabet codes are required to extract the full randomness of
the channel at higher signal-to-noise ratios (SNR).

There are many works in both theoretical analysis and practical implementation of physical layer security. Theoretical analysis in wire-tap channel date back to four decades ago \cite{TheWireTapChannel, BroadcastChannelswithConfidential} More recently, Bloch et al. propose the seminal practical opportunistic one-way secret key agreement protocol for Gaussian wiretap channel in \cite{BlochBarrosRodrigues}. The works done by Maurer \cite{maurer93} and Ahlswede and Csisz\'{a}r \cite{Ahlswede} show that correlated randomness can be used to generate secret keys. Their works lay down the analytical foundations for secret key generation in wireless communication. Sayeed and Perrig \cite{Sayeed} recognize the possibility of extracting secret keys from multipath randomness in wireless communication. Fundamental limits to key generation for multipath randomness are studied in ~\cite{Chou1, Chou3, Wilson, Ye_vtc07, Ye06, Nitinawarat07}.  In~\cite{Chou1, Chou3} the minimum energy-per-key-bit is characterized for rich fading channels and is extended in~\cite{Chou2} to sparse multipath channels. Eavesdropper with the ability to tamper the transmission has been studied by Maurer and Wolf \cite{MaurerWolf1, MaurerWolf2, MaurerWolf3}. More recently, Chou et al. study the secret key capacity of the sender-excited secret key agreement in \cite{ChouTanDraper}. Non-coherent secret key generation in which neither the sender nor the receiver have access to the channel state information has been studied in \cite{Ashish}. 

There are also many works on realizing physical layer security by designing practical secret key generating systems. These works are based on the earliest work by Hershey et al. \cite{HersheyHassan} and Hassan et al. \cite{HassanStark}. Ye et al. \cite{Ye1, Ye2} present an over-the-air implementation on $802.11$ platforms, prototyping a systematic design using a scalar fading channel coefficient. Jana et al. present yet another over-the-air implementation using the received signal strength indicators \cite{Suman}. Channel randomness is also exploited for device pairing \cite{ProxiMate} and authentication \cite{Liang1, Liang2, Banerjee}. Secret key generation system over MIMO has been considered in \cite{WallaceSharma} and the references therein. 

There are many related design related issues. Typical secret key generation process consists of three phases: randomness exploration, reconciliation and privacy amplification \cite{Ye2}. In randomness exploration, quantization is used to convert continuous observations to discretized information bits. A good quantizer should not only maximize the mutual information between Alice and Bob's bit sequences, but also reveal limited information to the eavesdropper. An algorithm is proposed in~\cite{Cardinal_1, Cardinal_2} to find such a quantizer. Ye et al. \cite{Ye2} propose an over-quantization technique to extract more bits per independent channel training. When the channel is over-static (long coherence time), filtering techniques, such as Discrete Cosine Transform in \cite{Yasukawa} and windowed moving average low pass filtering in \cite{Ye2}, are used to remove the redundancy in the extracted key bits.  Reconciliation process is typically done using various coding techniques, such as LDPC codes \cite{BlochBarrosRodrigues} and list-encoding \cite{ProxiMate}. For a detailed survey on reconciling two binary random variables, see \cite{BrassardSalvail}. Finally, universal hash functions are widely used \cite{Wilhelm, MaurerWolf3} for privacy amplification.  

In this paper, we show that the channel randomness can be further exploited through the channel diversity offered by the wireless front-end. We note that in many related works, such as \cite{Suman, Ye1, Ye2, ProxiMate}, secret key bits are extracted from a single parameter observed in wireless channel. This fundamentally limits the rate at which the secret key bits that can be extracted. For instance, in \cite{Ye1} only one bit can be extracted per independent channel realization although they over-quantize it to increase the number of bits in their later work \cite{Ye2}. Similarly, in \cite{ProxiMate} only one bit can be extracted per coherence time. We thus argue that by exploiting the channel diversity in wireless multipath fading channel, one can significantly improve the secret key capacity. 
%For RSSI-based methods such as \cite{Suman}, when the channel is over-static, meaning there
%is significant correlation across observations, key bits may contain
%long runs of 0's and 1's, which make it susceptible to potential
%attackers. In~\cite{Yasukawa}, an adaptive quantizer is proposed to
%address the problem caused by the over-static channel. They
%incorporate the Discrete Cosine Transform (DCT) transform to remove
%the redundancy caused by long runs of 0's and 1's. Universal hash
%functions (UHFs) are used in~\cite{Wilhelm} to amplify the randomness
%by extracting the maximum possible amount of entropy. In quantizer
%design, a good quantizer should not only maximize the mutual
%information between Alice and Bob's bit sequences, but also reveal
%limited information to the eavesdropper. An algorithm is proposed
%in~\cite{Cardinal_1, Cardinal_2} to find such a quantizer.

%~\cite{Suman, Sayeed, Wilhelm}.

%While physical layer CSI are estimated in modern transceivers, current
%standards do not provide higher layer with access to them. Many
%research therefore use received signal strength indicator (RSSI) as an
%alternative~\cite{Suman, Yasukawa}. However, RSSI does not represent
%the full CSI and only reports the received power which is used for
%power control monitoring. We demonstrate in this paper that secret key
%generation based on CSI significantly outperforms the one based on
%RSSI.

\subsection{System overview}
\label{sec.sysOverview}

To lend concreteness to the ensuing discussion we describe the
operation of the key extraction algorithm discussed later in the
paper.  To generate their correlated
observations Alice and Bob each transmits known channel sounding
(training) signals to each other.  This two-way training is done in two consecutive time slots.  As long as the
channel is static over these two time slots (the key assumption of our
model~\cite{Suman, Sayeed}) then, due to the reciprocity of
electromagnetic wave propagation, Alice and Bob both obtain (noisy)
observations of the same multipath fading channel coefficients.  Eve
is assumed to listen to both transmissions, but due to the fast
spatial decorrelation of multipath channels, we assume
for the remainder of the paper that her observations are independent
and thus useless for estimating the realized channel law.  Alice then
quantizes her observations into some finite alphabet. (If Alice did
not quantize her observations there would be no way Bob could recover
the {\em exact} same coefficients with high
probability.)  Alice then sends to Bob a public message.  In our algorithm the public message is the syndrome of some
length-$N$ error correcting code where $N$ is the length of Alice's
vector of quantized channel coefficients. Bob combines the public
message with his observations in his attempt to recover Alice's
quantized observations.  We describe two possibilities for Bob.
First, that he quantizes his own observations before his recovery
attempt (``hard'' decoding) and, second, that he bases his recovery
attempt on his un-quantized observations (strictly better ``soft''
decoding).

%\begin{figure}
 % \centerline{\epsfig{figure=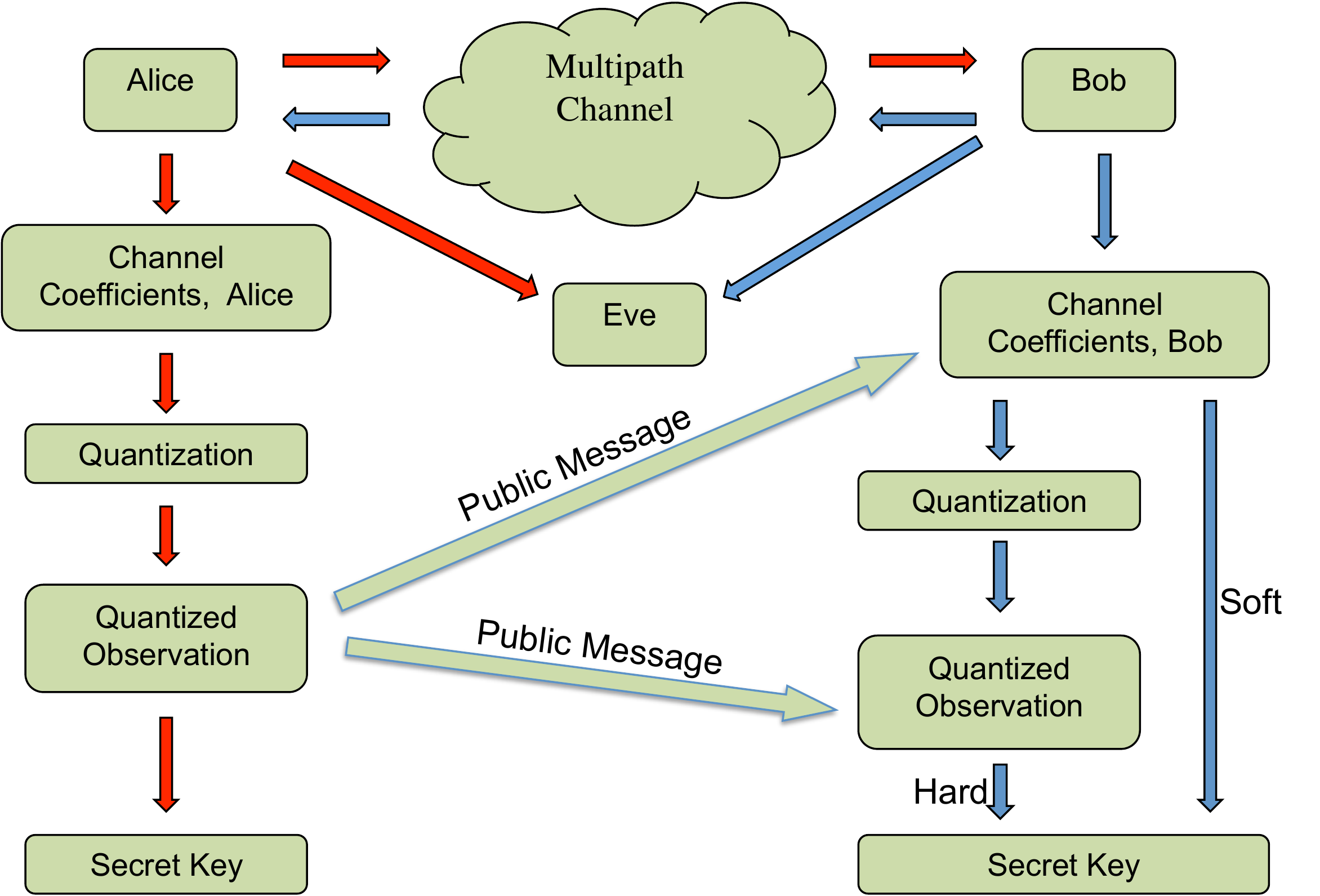,width=9cm}}
%  \caption{System Diagram}
%  \label{fig.systematic_model}
%\end{figure}

We do not consider authentication in our proposed secret key
generation system \cite{Liang1, Liang2, Banerjee}. Therefore, our system does not address active attacks such as man-in-the-middle attacks. One
could always first authenticate the validity of Alice and Bob by using
public key cryptography before invoking our secret key generation
system. 

%%%%%\subsection{Our Contribution}
%We introduce an OFDM channel model from which we generate channel
%coefficients and introduce the ways to quantify the mutual information
%(and thus secret key capacity) between Alice and Bob. We discover that
%CSI based method is superior to RSSI based method proposed
%in~\cite{Suman} and others. Thus we suggest that modern standards
%should make CSI accessible to higher layers. Through simulation, we
%show that the loss due to Alice's quantization is not significant even
%with binary quantization and it could be improved by using higher
%level quantizers. We also show that it is always preferable, at least
%from a mutual information point of view, to utilize the real and
%imaginary parts of channel coefficients instead of the amplitudes and
%phases for key generation. Finally, we design both regular and
%irregular Slepian-Wolf LDPC codes to reconcile the information between
%Alice and Bob. We also design a non-binary (4-ary or $GF(4)$)
%Slepian-Wolf LDPC code to meet the need of a higher level quantizer.

\subsection{Notation and outline}

Unless otherwise specified, we use upper case letters, e.g., $X$ to
denote random variables and bold uppercase, e.g., $\bX$ to denote
random vectors; $x$ and $\bx$ are their respective realizations. If
$X$ is a complex random variable, we use $\fRe(X)$ and $\fIm(X)$ to
denote, respectively, the real and imaginary parts of $X$. We use $X
\sim \cCN(m, \sigma^2)$ to denote a complex Gaussian random variable
$X$ with mean $m$, variance $\sigma^2$, and with real and imaginary
parts independent and identically distributed. 

The rest of the paper is organized as follows. In
Sec.~\ref{sec.background} we provide background material on the OFDM
channel model. In Sec.~\ref{sec.sysModelDefs} we define secret key
capacity and introduce the measurement model.  In
Sec.~\ref{sec.capCalcs} we evaluate the secret key capacity for
  various channels of interest, and draw a number of useful lessons
  for designs.  In Sec.~\ref{sec.implement} we describe our
  designs and algorithms. In Sec.~\ref{sec.results} we provide
  numerical results for typical 802.11a parameter settings and secret key capacity. We
  conclude in Sec.~\ref{sec.conclusion}.  Some proofs are deferred to the
  Appendix.

\section{Channel Diversity: An OFDM Example}
\label{sec.background}

In this section, we introduce diversity signaling technique used in OFDM system. We then characterize the channel coefficients which represent the channel diversity and from which we extract our secret keys. 
%Since we extract the keys from sampled channel coefficients, we use
%the term ``sampled channel coefficients'' and ``channel coefficients''
%interchangablly.
The OFDM model we use follows closely that introduced in
\cite{Sayeed_book}.

\subsection{OFDM signaling} 
Let $T$ denote the signaling duration and $W$ denote the two-sided
bandwidth of a wireless link with $M = TW$. An OFDM system
transmits $M$ orthogonal signals.  The transmitted signal $s(t)$ can be represented as
\begin{equation}
s(t) = \sum_{n=0}^{M-1} s_n \phi_n(t), \mbox{	} 0 \leq t \leq T,
\end{equation}   
where the $s_n$ are the information-bearing signal coefficients and
the $\phi_n(t)$ are the orthogonal modulating waveforms or
``tones''.  In OFDM the Fourier basis is used, i.e.,
\begin{equation}
\phi_n(t) = \left\{ \begin{array}{lcl} \frac{1}{\sqrt{T}} e^{j2\pi n \Delta f t}& \mbox{if} & 0 \leq t \leq T\\ 0 & \mbox{else} \end{array} \right.
\end{equation}  
where $\Delta f = \frac{1}{T}$.  The received signal $r(t)$ is $r(t) = h(t) \ast s(t) + w(t)$ where $h(t)$ is the communication channel, assumed to be
time-invariant during the two-way training, $w(t)$ is the
receiver noise, and $\ast$ denotes continuous-time convolution.  We
model $w(t)$ as a complex zero-mean white Gaussian noise process with
autocorrelation function $E[w(t_1) w^{\ast}(t_2)] = \varNoise
\delta(t_1 - t_2)$ where $\delta(\cdot)$ is the Dirac delta. 

We discretize the observation $r(t)$ by projecting it onto the orthogonal basis
functions $\phi_n(t)$ to produce
\begin{eqnarray}
\label{ofdm}
r_n = \int_{- \infty}^{\infty} r(t) \phi^{*}_n(t) dt = H_n s_n + w_n,
\end{eqnarray} 
where 
\begin{equation*}
H_n = \int_{- \infty}^{\infty} \sqrt{T} h(t) \phi^{*}_n(t) dt
\end{equation*}
is the {\em frequency domain} channel coefficient at the $n^{th}$ tone
and the
\begin{equation*}
w_n = \int_{- \infty}^{\infty} w(t) \phi^{*}_n(t) dt
\end{equation*}
are independent zero-mean complex Gaussian random variables of
variance $\varNoise$.

Wireless multipath channels $h(t)$ are well modeled in \cite{Sayeed_book}
as having an echo-type impulse response.  In particular, let
\begin{equation}
h(t) = \sum_{k=1}^{N_p} \beta_k \delta(t -
\tau_k) \label{eq.defEcho}
\end{equation}
where $N_p$ is the total number of propagation paths, and $\tau_k \in
[0, \tau_{\max}]$, $\tau_{\max}$ and $\beta_k$ are the delay, the delay spread and the
complex channel gain associated with the $k^{th}$ path.  Since
$\tau_k$ is typically much longer than the speed of light divided by
the carrier, each $\beta_k$ is well modeled as having uniform random phase. Also, 
since the scaterring objects are distinct, $\beta_k$ are well modeled 
as independent random variables. We incorporate an exponential power-delay 
profile where the variance of $\beta_k$ decays with $\tau_k$. 

The frequency domain channel coefficients $H_n$, $0 \leq n \leq M-1$ are 
\begin{align}
H_n & = \sum_{k=1}^{N_p} \beta_k e^{-j2 \pi \frac{n}{T} \tau_k}\label{freq_coeff}\\ 
& \approx \frac{1}{\sqrt{M}} \sum_{l=0}^{M-1}
h_{\ell} e^{-j 2 \pi \frac{n\ell}{M}},\label{eq.approxCoeff}
\end{align}
where in~(\ref{eq.approxCoeff}) we approximate $H_n$ by quantizing the
$\tau_k$ into $M$ delay bins and aggregating the effect of the
$\beta_k$ terms into the $h_{\ell}$.  Each bin is of length $\tau_{\rm
  bin} = 1/W$ and
\begin{align}
h_{\ell} & = \sqrt{M} \sum_{k=1}^{N_p} \beta_k {\rm sinc} \left[ W
  \left(\frac{\ell}{W} - \tau_k\right) \right] \label{eq.sinc}\\ &
\approx \sqrt{M} \sum_{k : \frac{\ell - 0.5}{W} < \tau_k \leq \frac{\ell +
    0.5}{W}} \beta_k. \label{eq.noSinc}
\end{align}
The variable $h_{\ell}$ is the \textit{sampled} or \textit{time
  domain} channel coefficient associated with the ${\ell}^{th}$ {\em
  resolvable} delay bin.  If there are many $\beta_k$ associated with
each bin, as is the case for rich multipath, the $h_{\ell}$ are well
approximated as zero-mean complex Gaussian random variables; a further
approximation justified by the central limit theorem. 

In OFDM channel, $\tau_{\max} \leq T$, thus only the first few
delay bins have physical paths contributing to them, similarly only the first few $h_{\ell}$
will be significant.  Say the first $L \leq M$ sampled channel
coefficients are significant, then we further simplify our
approximation of $H_n$ as
\begin{align}
H_n & \approx \frac{1}{\sqrt{M}} \sum_{\ell= 0}^{L-1} h_{\ell} e^{-j 2
  \pi \frac{n\ell}{M}}. \label{eq.nonZeroCoeff}
\end{align} 
where we have neglected the effect of the tails of the sinc waveforms
in~(\ref{eq.sinc}).

The $h_{\ell}$ is well modeled as having uniform phase (as remarked
following~(\ref{eq.defEcho})) and having a complex Gaussian
distribution (as remarked following~(\ref{eq.noSinc})).  Since the
paths aggregated into distinct $h_{\ell}$ are typically reflections
from distinct scatters, the $L$ non-zero $h_{\ell}$ are also often
well modeled as being statistically independent.  However, the
$h_{\ell}$ are not identically distributed; the variance is roughly
inversely proportional to $\tau_k$ as the result of the exponential power-delay
profile on $\beta_k$. On the other hand, $H_n$ exhibits
Gaussian characteristic under rich multipath with variance $\sigma_H^2$. Following from (\ref{freq_coeff}), we have 
\begin{align*}
\sigma_H^2 &= E\left[|H_n|^2\right] = \sum_{k = 1}^{N_p} E\left[ |\beta_k|^2 \right],
\end{align*}
which does not depend on $n$. Hence, while the $H_n$ are not independent, they have the same marginal distribution. 

We note that if there are only a few transmission paths, the assumption that channel coefficients are Gaussian distributed no longer holds. However, we are using Gaussian model as an example to illustrate the importance of exploiting channel diversity, which is actually not limited to Gaussian case.

\subsection{Signal-to-noise ratio}
As mentioned above, when multipath is rich, i.e., $N_p$ is large, the
$H_n$ can be well modeled as $\cCN(0, \sigma_H^2)$.
We define the {\em per-tone} SNR as
\begin{eqnarray}
\label{freq_snr}
SNR_f = \frac{E[H_n^2]}{E[w_n^2]} = \frac{\sigma_H^2}{\varNoise}.
\end{eqnarray}
%Because the DFT is a unitary transform, Parseval's theorem states that
%\begin{eqnarray}
%\sum_{n=0}^{M-1} |H_n|^2 = M \sum_{k=1}^{N_p} |\beta_k|^2 \approx
%\sum_{\ell = 0}^{M-1} |h_{\ell}|^2 \approx \sum_{\ell = 0}^{L-1}
%|h_{\ell}|^2,
%\end{eqnarray}
%where the approximation follows from the approximations above.
%Also as discussed above, $h_{\ell}$ is well modeled as $\cCN(0,
%\sigma_h^2(\ell))$.  We then have the following relationship between
%the variance of frequency domain channel coefficients and sampled
%coefficients:
%\begin{eqnarray}
%\sum_{\ell = 0}^{L-1} \sigma_h^2(\ell) = \sum_{\ell = 0}^{L-1}
%E[h_{\ell}^2] \approx \sum_{n=0}^{M-1} E[H_n^2] = M \sigma_H^2,
%\end{eqnarray}
Also as discussed above, $h_{\ell}$ is well modeled as $\cCN(0,
\sigma_h^2(\ell))$. We can thus define the {\em time-domain} SNR as
\begin{equation}
SNR_{\tau}(\ell) = \frac{\sigma_h^{2}(\ell)}{\varNoise}.
\end{equation}
It can be shown that we have the relation
\begin{eqnarray}
\sum_{\ell = 0}^{L-1} SNR_{\tau}(\ell) \approx M \cdot SNR_f.
\end{eqnarray}
If the sampled channel coefficients have equal variance, the
relationship simplifies to
\begin{eqnarray}
SNR_{\tau} \approx \frac{M}{L} SNR_f.
\end{eqnarray}

\section{Secret key systems: Definitions and measurement model}
\label{sec.sysModelDefs}

%%%%%%%%%%%%%%%%%%%%%%%%%%%%%%

In this section we introduce the measurement model, secret key
generation system, and study the secret key capacity.

\subsection{System model}
\label{sec.sysModel}

Secret key generation system has been studied by many authors. In
particular, the authors in~\cite{maurer93, Ahlswede} study the fundamental limits on the
achievable secret key rates. We state their results for reference in
the context of our application.

\begin{definition}
\label{def.secretKeySys}
A length-$N$ \textit{secret key generation system} over alphabets
$\cX_A, \cX_B, \cK, \cS$ is a triplet of functions ($f$, $g_A$, $g_B$):
\begin{eqnarray}
f : \cX_A^{N} \!&\rightarrow&\! \cK^{NR}, \label{eq.keyGen}\\ g_A :
\cX_A^{N} \!&\rightarrow&\! \cS^{m}, \label{eq.syndromeGen} \\ g_B :
\cX_B^{N} \times \cS^{m} \!&\rightarrow&\!
\cX_A^{N}. \label{eq.keyRec}
\end{eqnarray}
\end{definition}

We interpret this definition in the context of the system operation
described in Sec.~\ref{sec.sysOverview}. The function $f$ maps Alice's source
of randomness into the secret key.  The function $g_A$ defines the
public message Alice sends to Bob.  The function $g_B$ is Bob's
decoding function that maps his observation and the public message
into his estimate of Alice's observation.  If Bob's estimate is correct
applying $f(\cdot)$ to it will recover the key.

%We let $W = 1-R'$ such that the key extracting rate can be viewed as
%the fraction of observation that is not revealed to the public.

Given a source of randomness $p_{X_A^N, X_B^N}(x_A^N, x_B^N)$, where
$x_A^N \in {\cal X}_A^N$ and $x_B^N \in {\cal X}_B^N$, the secret key capacity
is the supremum of achievable secret key rates.  An achievable secret key
rate is defined as follows.
\begin{definition}
A secret key rate $R$ is \textit{achievable} if for any $\epsilon > 0$
and $N$ sufficiently large, we have:
\begin{align}
N R \log |\cK| - H(f(X_A^N)) & \leq \epsilon, \\
\Pr\left[f(X_A^N) \neq f(g_B(X_B^N, g_A(X_A^N)))\right] & < \epsilon, \\
\frac{1}{N} I(f(X_A^N); g_A(X_A^N)) & \leq \epsilon. \label{eq.secrecyCond}
\end{align}
\end{definition}
The first inequality implies that the secret key is nearly uniformly
distributed.  The second inequality upper bounds the probability of
error in key recovery.  The final inequality is the secrecy guarantee,
i.e., that the public message tells you little about the key.

The above definitions are often stated for a setting in which an
attacker (Eve) has access to a correlated measurement of the source
{\em as well as} the public message.  We do not include this
possibility in the definitions as stated herein due to the source of
randomness we study.  We will characterize secret key capacity for an
OFDM system where the correlated observations $X_A^N$ and $X_B^N$ are
functions of the underlying channel law (the $H_n$ or the $h_{\ell}$
of Sec.~\ref{sec.background}).  In a rich scattering environment the
channel law between two users changes utterly if either moves more
than a few wavelengths (a few centimeters for an OFDM system).
Therefore an eavesdropper would have to be positioned extremely close
to either Alice or Bob to get useful channel observations.  This is
 one of the inherent strengths of this source of randomness -- it is
difficult to eavesdrop.  And for this reason we ignore the possibility
of eavesdropping throughout the rest of the paper.  (In contrast, the
public message is easy to intercept, and so we must assume Eve has
knowledge of that message.)

In~\cite{Ahlswede} the following theorem is shown
\begin{theorem}
For a discrete memoryless source $p_{X_A^N, X_B^N}(x_A^N, x_B^N)$ the
secret key capacity is
\begin{equation} 
C = \lim_{N \rightarrow \infty} \frac{1}{N} I(X^N_A; X^N_B), \label{eq.defCap}
\end{equation}
assuming the limit exists.
\end{theorem}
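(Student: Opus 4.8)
The plan is to establish this theorem --- the characterization of secret key capacity for a discrete memoryless source as $C = \lim_{N\to\infty}\frac{1}{N}I(X_A^N;X_B^N)$, which under the i.i.d.\ assumption equals the single-letter mutual information $I(X_A;X_B)$ --- by proving achievability and a converse separately, following the classical argument of Ahlswede--Csisz\'ar and Maurer.

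For the converse, I would start from the secrecy and reliability constraints in the definition of an achievable rate. Let $K = f(X_A^N)$ be the key and $S = g_A(X_A^N)$ the public message. Bob recovers (w.h.p.) $X_A^N$, hence recovers $K$, from $(X_B^N, S)$; by Fano's inequality, $H(K \mid X_B^N, S) \le N\epsilon'$ for some $\epsilon' \to 0$. Then I would write the chain
\begin{align*}
NR\log|\cK| &\le H(K) + \epsilon \\
&= I(K; X_B^N, S) + H(K\mid X_B^N, S) + \epsilon \\
&\le I(K; X_B^N, S) + N\epsilon' + \epsilon.
\end{align*}
Next, expand $I(K; X_B^N, S) = I(K; S) + I(K; X_B^N \mid S)$; the first term is at most $N\epsilon$ by the secrecy condition~(\ref{eq.secrecyCond}), and the second is bounded using the Markov structure $K - X_A^N - X_B^N$ together with the fact that $S$ is a function of $X_A^N$, giving $I(K; X_B^N \mid S) \le I(X_A^N; X_B^N)$ after removing the conditioning on the function $S$ of $X_A^N$ (one can bound $I(K;X_B^N\mid S)\le I(X_A^N S; X_B^N)=I(X_A^N;X_B^N)$ since $S$ is a deterministic function of $X_A^N$). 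Dividing by $N$ and letting $\epsilon\to 0$ yields $R \le \frac{1}{N}I(X_A^N;X_B^N)$, and for the memoryless source the right side equals $I(X_A;X_B)$ for every $N$, which is the claimed limit.

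For achievability, I would use a Slepian--Wolf / random-binning construction: Alice bins $X_A^N$ using a random hash into roughly $2^{NH(X_A\mid X_B)}$ bins (this is the public message $S$) and extracts the key by a second independent hash (e.g.\ a universal hash function for privacy amplification) producing $\approx N[H(X_A) - H(X_A\mid X_B)] = N I(X_A;X_B)$ bits. By the Slepian--Wolf theorem Bob, knowing $X_B^N$ and the bin index $S$, decodes $X_A^N$ with vanishing error probability, hence recovers the key. The leftover-hash lemma (or the balanced-coloring / typicality argument of~\cite{Ahlswede}) guarantees that the extracted key is nearly uniform and nearly independent of $S$, giving the secrecy condition. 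Putting the two bounds together establishes $C = I(X_A;X_B)$.

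The main obstacle --- and the step requiring the most care --- is the secrecy part of the achievability: showing that the extracted key is statistically close to uniform \emph{conditioned on} the public message $S$. This requires the privacy-amplification machinery (two-universal hashing and the leftover-hash lemma, or equivalently a careful second-order typicality argument), and it is here that the precise size of the key ($NI(X_A;X_B)$ bits and not more) is pinned down; any slack in the binning rate or the hash output length breaks either reliability or secrecy. Since this result is quoted from~\cite{Ahlswede}, I would cite their proof for the delicate estimates rather than reproduce them, and only sketch the binning/hashing skeleton above.
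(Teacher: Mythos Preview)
Your sketch is correct and follows the classical Ahlswede--Csisz\'ar/Maurer argument: Fano plus the secrecy condition for the converse, Slepian--Wolf binning plus privacy amplification for achievability. However, the paper does not prove this theorem at all --- it simply states it as a known result and cites~\cite{Ahlswede} (``In~\cite{Ahlswede} the following theorem is shown''). So there is nothing to compare against; your proposal supplies a proof where the paper supplies only a reference, and the proof you outline is exactly the one in the cited work.
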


%Furthermore, we are not concerned about how keys are extracted from
%observation, i.e., the function $f$. We therefore ignore this function
%throughout the paper and concentrate on the reconciliation of Alice
%and Bob's observations. It is natural to assume that Bob quantizes his
%observation using the same function $q$ as Alice does. However, this
%is not required, as discussed in later sections. Whether or not Bob
%quantizes his observation is analogous to hard or soft decoding.

\subsection{Measurement model}
The sources of randomness we work with in the paper are noisy
measurements of the channel coefficients.  Alice and Bob each sends an
identical and known sounding signal $s(t)$ to the other.  For
simplicity we assume each signal coefficient $s_n = 1$.  (Equal-power
sounding is known not always to be the best choice, see~\cite{Chou1}.)
%Most of the off-the-shelf Wi-Fi transceivers are time division duplex
%(TDD) transceivers, which makes it impossible for both Alice and Bob
%to sound the channel and measure the channel at the same time.  Hence,
%we have Alice and Bob to transmit their sounding signals in turn, and
We assume that the channel remains static during this two-way training.
The period in which a wireless channel is roughly static is termed the
{\em coherence period}.  Thus, this two-way training is assumed to
occur within a single coherence period.

Under this channel assumption we model Alice and Bob's measurements
as
 \begin{align}
 \label{observ_model}
 H_{A, n} &= H_n + w_{A, n} \nonumber \\
 H_{B, n} &= H_n + w_{B, n},
 \end{align}
respectively, where $w_{A, n}$, $w_{B, n} \sim \cCN(0, \varNoise)$ are
independent sources of noise. We notice that the phase offset caused by the local oscillators may add extra noise to the measurement \cite{WuBarness, Liang1, ProxiMate}. We defer the discussion of phase offset to the end of this section.

The frequency domain correlation
coefficient between Alice and Bob's observation at $n^{th}$ tone can be shown to be :
\begin{align}
\rho_f &= \frac{SNR_f}{1+SNR_f}.
\end{align}
Note that the correlation $\rho_f$ between $H_{A, n}$ and $H_{B, n}$
is equal to the correlation between $\fRe(H_{A, n})$ and $\fRe(H_{B,
  n})$ and is also equal to that between $\fIm(H_{A, n})$ and
$\fIm(H_{B, n})$. We can also consider the time domain observation as:
\begin{align}
\label{observ_model_time}
 h_{A, \ell} &= \fRe(h_{A, \ell}) + j \fIm(h_{A, \ell}) = h_{\ell} + n_{A, \ell} \nonumber \\
 h_{B, \ell} &= \fRe(h_{B, \ell}) + j \fIm(h_{B, \ell}) = h_{\ell} + n_{B, \ell},
\end{align}
where $h_{\ell} \sim \cCN(0, \sigma_h^2(\ell))$ is the sampled channel
coefficient and $n_{B, \ell}$, $n_{A, \ell} \sim \cCN(0, \varNoise)$
are the noises. Similar to the correlation in frequency domain, the
correlation coefficient in $\ell^{th}$ sampled channel coefficient is
given as:
\begin{align}
\label{corr_coeff_time}
\rho_{\tau}(\ell) &=
\frac{SNR_{\tau}(\ell)}{1+SNR_{\tau}(\ell)}.
\end{align}
Note again that the correlation coefficient $\rho_{\tau}(\ell)$
between $h_{A, \ell}$ and $h_{B, \ell}$ is equal to the correlation
coefficient between $\fRe(h_{A, \ell})$ and $\fRe(h_{B, \ell})$ or
equivalently equal to that between $\fIm(h_{A, \ell})$ and $\fIm(h_{B,
  \ell})$.

To get to the long block-lengths possibly required to approach secret key
capacity, we repeat this two-way channel sounding across multiple
channel coherence periods.  The channel is assumed to be independently
and identically distributed across coherence periods.  Say that
within each period Alice and Bob generate channel observations $h_{A,
  \ell}$ and $h_{B, \ell}$, respectively, for $\ell = 1, 2, \ldots,
L$.  Further, say they do this for $n$ coherence periods yielding
measurements $h_{A, \ell}[i]$ and $h_{B, \ell}[i] $ for $i = 1, 2,
\ldots, n$.  They stack their observations into the length-$N$ real
vectors, where $N = 2nM$ as follows:
\begin{equation}
X_A^N = \left[ \begin{array}{c} \fRe(h_{A,1}[1])
    \\ \fIm(h_{A,1}[1])\\ \vdots \\ \fRe(h_{A,L}[1])
    \\ \fIm(h_{A,L}[1])\\ \fRe(h_{A,1}[2])\\ \vdots
    \\ \fIm(h_{A,L}[n]) \\ 0 \\ \vdots \\ 0
\end{array} \right], \hspace{1em} 
X_B^N = \left[ \begin{array}{c} \fRe(h_{B,1}[1])
    \\ \fIm(h_{B,1}[1])\\ \vdots \\ \fRe(h_{B,L}[1])
    \\ \fIm(h_{B,L}[1])\\ \fRe(h_{B,1}[2])\\ \vdots
    \\ \fIm(h_{B,L}[n]) \\ 0 \\ \vdots \\ 0
\end{array} \right], \label{eq.stackObs}
\end{equation}
where the padding is with $2nM - 2nL = 2n(M-L)$ zeros.  These are the
degrees of freedom lost due to the fact that the last $M-L$
coefficients in each block are zero, cf.~(\ref{eq.nonZeroCoeff}).
Were the approximation that the last $M-L$ coefficients in each block
were zero to be exact, then due to the i.i.d.\ assumption across
coherence blocks, the limit in~(\ref{eq.defCap}) would exist and would
evaluate to
\begin{equation}
C = \lim_{N \rightarrow \infty} \frac{1}{N} I(X_A^N; X_B^N) =
\frac{1}{2M} I(h_A^L; h_B^L) \label{eq.capLimit}
\end{equation}
where $h_A^L$ and $h_B^L$ are, respectively, the length-$L$ complex
vectors of observations made by Alice and Bob.  While the definitions
provided in Sec.~\ref{sec.sysModel} are for finite alphabets, the extension to
continuous alphabets follows from standard limiting arguments.

\subsection{Phase Offset}
\label{sec.PhaseNoise}

We have thus far implicitly assumed the perfect synchronization between Alice and Bob. In practice, however, Alice and Bob measured channel parameters may be effected by the phase offset caused by the local oscillators of both transmitters. Since phase synchronization is not perfect, there is a phase offset during each channel sounding. Furthermore, the frequency generated by local oscillators continuously fluctuates (or drifts) around its center frequency, causing a time dependent phase drift. There are many existing techniques developed to mitigate the effect of such phase offset (see \cite{WuBarness}, \cite{Liang1} and the references therein). 

Since the signal duration is very small in a channel training, we assume that during each channel sounding phase offset caused by oscillator frequency drift is negligible, i.e., the phase offset is time invariant. However we do not assume it is negligible across channel trainings, i.e., between coherence intervals. Denote the phase offset caused by Alice and Bob's local oscillators as $\theta_A$ and $\theta_B$ respectively. The offsets can be incorporated into Alice's and Bob's measurements as $h_A^L e^{j\theta_A}$ and $h_B^L e^{j\theta_B}$ cf.~(\ref{eq.noSinc}). Since phase offset is differential, without the loss of generality, we can incorporate the error into Bob's measurement and write $h_A^L$ and $h_B^L e^{j\theta}$ with $\theta = \theta_B - \theta_A$. Then the unnormalized secret key capacity in (\ref{eq.capLimit}) becomes $I(h_A^L; h_B^L e^{j \theta})$. 

We show that by exploiting the channel diversity, one can mitigate the effect on the secret key capacity caused by phase offset. First, note that 
\begin{align*}
I(h_B^L e^{j \theta}; h_A^L, e^{j \theta}) &= I(h_B^L e^{j \theta}; h_A^L)  + I(h_B^L e^{j \theta}; e^{j \theta} | h_A^L) \\
&= I(h_B^L e^{j \theta}; e^{j \theta}) + I(h_B^L e^{j \theta}; h_A^L | e^{j \theta}).
\end{align*}
Then we can write:
\begin{align}
\label{eq.haha}
I(h_B^L e^{j \theta}; h_A^L) =& I(h_B^L e^{j \theta}; h_A^L | e^{j \theta}) + I(h_B^L e^{j \theta}; e^{j \theta}) \nonumber \\
&- I(h_B^L e^{j \theta}; e^{j \theta} | h_A^L) \nonumber \\
=& I(h_B^L; h_A^L) \! + \! I(h_B^L e^{j \theta}; e^{j \theta}) - I(h_B^L e^{j \theta}; e^{j \theta} | h_A^L) \nonumber \\
\stackrel{(a)}{\geq}&  I(h_B^L; h_A^L) - I(h_B^L e^{j \theta}; e^{j \theta} | h_A^L). \nonumber \\
=& I(h_B^L; h_A^L) - I(e^{j \theta}; h_A^L, h_B^L e^{j \theta}). 
\end{align}
Inequality $(a)$ is equality ($I(h_B^L e^{j \theta}; e^{j \theta}) = 0$) if $h_B^L$ is circularly symmetric complex Gaussian vector since $h_B^L$ and $h_B^L e^{j \theta}$ have the same distribution. The last equality follows because $e^{j \theta}$ is independent of $h_A^L$. The second term on the right hand side of (\ref{eq.haha}), $I(e^{j \theta}; h_A^L, h_B^L e^{j \theta})$, is the secret key capacity loss caused by the phase offset and it is the decrease in uncertainty about the unknown offset $e^{j \theta}$ given the knowledge of $h_A^L$ and $h_B^L e^{j \theta}$ as measured in bits. Note that because $h_B^L$ and $h_B^L e^{j \theta}$ have the same distribution nothing can be learned about $\theta$ by observing $h_B^L e^{j \theta}$ only. However in combining with the knowledge of $h_A^L$ one can better estimate $\theta$ because $\angle h_B^L e^{j \theta} = \angle h_B^L + \theta$ and $\angle h_B^L$, $\angle h_A^L$ are dependent random variables. Thus we get $L$ independent looks at $\theta$ with additive noise (since $h_A^L$ and $h_B^L$ each has $L$ independent entries). Note that because of the additive noise, it is impossible to estimate $\theta$ with infinite precision.

Since $\theta$ is a scalar the loss term does not scale linearly in $L$. By the Cram\'{e}r-Rao bound we know that variance of the estimate of $\theta$ can drop at most as $\frac{1}{L}$ which means for general distribution the loss should scale as $\log L$. In other words, as $L$ becomes larger while one can potentially get better estimate of $\theta$ from channel observations, the loss is scaling more slowly than the gain from the first term on the right hand side of (\ref{eq.haha}), which scales linearly in $L$. This supports our claim that channel diversity should be exploited, both as a way to boost secret key capacity and to mitigate the phase offset.

Later in Sec.~\ref{sec.implement} we will show how the LDPC design can be adapted to perform the estimation of phase offset.

\section{Secret key capacity calculations}
\label{sec.capCalcs}

We are now in position to evaluate the secret key capacity for various
channels of interest.  In Sec.~\ref{sec.capCSI} we first do this for
the general OFDM model of time-domain channel coefficients.  Then, to
ease analysis, we focus on an idealized model wherein all sampled
channel coefficients have the same variance.  This simplification
allows us to draw a number of general lessons on secret key generation
for OFDM channels.  In Sec.~\ref{sec.RSSI} we quantify the (quite
large) reduction in secret key rate when only received signal strength
indicator (RSSI) information is available, as opposed to full CSI.  In
Sec.~\ref{sec.ampPhase} we discuss generating keys separately from the
amplitude and phase of the CSI, as opposed to the real and imaginary
parts.  %Because of correlation between the amplitude measurement of
%Alice and the phase of Bob, and vice-versa, there is a loss from doing
%this.  Finally, in Sec.~\ref{sec.quant} we discuss the impact of
%quantization.

%%%%%%%%%%%%%%%%%%%%%%%%%%%%%%%%%%%%%%
\subsection{Secret key capacity based on CSI}
\label{sec.capCSI}

We now evaluate~(\ref{eq.capLimit}) in terms of the SNR of the
channel.  Due to the fact that $h_A^L$ and $h_B^L$ are jointly complex
Gaussian and i.i.d.\ in time, we have
\begin{eqnarray}
\label{secret_key_capacity}
C \approx  -\frac{1}{2M} \sum_{\ell = 0}^{L-1} \log \bigg [ 1- \bigg
  (\frac{SNR_{\tau}(\ell)}{1+SNR_{\tau}(\ell)} \bigg )^2 \bigg ],
\end{eqnarray}
where the approximation follows from the fact that the last $M-L$ sampled coefficients are 
approximately zero. If the sampled coefficients have equal variance, the capacity
simplifies to
\begin{eqnarray}
\label{secret_key_capacity_ideal}
C \approx - \frac{L}{2M} \log \bigg [ 1- \bigg
  (\frac{SNR_{\tau}}{1+SNR_{\tau}} \bigg )^2 \bigg ].
\end{eqnarray}

Note that the correlation coefficient in time relates to that in
frequency as:
\begin{equation}
\label{rho_tau}
\rho_{\tau} %&=& \rho_{\tau}(\ell) 
\approx \frac{M \cdot SNR_f}{L + M \cdot SNR_f} 
= \frac{M \rho_f}{L + (M-L) \rho_f}.
\end{equation}

%We comment that if we have access to the statistics of frequency
%domain coefficients, the secret key capacity between Alice and Bob can
%be equivalently calculated from frequency domain channel
%coefficients. To do this, we perform eigenvalue decomposition (EVD) on
%the correlation matrix of frequency domain coefficients thus
%completely decorrelate those frequency domain coefficients into
%independent entities and the mutual information is the sum of the
%mutual information of these entities. The secret key capacity is then
%the total mutual information normalized by $2M$. However, we may not
%be able to do this in the actual operation of the system because the
%training between Alice and Bob may only occur few times before they
%have enough statistics of frequency domain coefficients unless it is
%an available prior.

In the remainder of this section we focus on an idealized model
wherein all sampled channel coefficients have the same variance. We
let $h_{A, \ell}$, $h_{B, \ell} \sim \cCN(0, \sigma^2)$ where
$\sigma^2 = \sigma_h^2 + \varNoise$. Note that $\fRe (h_{A, \ell})$,
$\fRe (h_{B, \ell})$ have correlation coefficient $\rho_{\tau}$
defined in~(\ref{rho_tau}) and $\fIm (h_{A, j})$, $\fIm (h_{A, j})$
also have the same correlation coefficient $\rho_{\tau}$. The secret
key capacity between Alice and Bob now reduces
to~(\ref{secret_key_capacity_ideal}).

%%%%%%%%%%%%%%%%%%%%%%%%%%%%%%%

\subsection{Secret key generation based on measurements of RSSI}
\label{sec.RSSI}

In this section we compare the secret key capacity given sampled
channel coefficients~(\ref{secret_key_capacity_ideal}) to the secret
key capacity if only receiver signal strength indicator (RSSI) values
are available.  Since RSSI summarizes the true vector of channel state
information, there will clearly be a reduction in secret key capacity
if only RSSI values are made available.  In fact the reduction is dramatic.  
From a technological point of view, most
off-the-shelf wireless transceivers make only RSSI values available to
the upper layers, not the channel state information.  This section
demonstrates that making full CSI available would greatly help the
ability to generate secret keys.

To calculate the secret key capacity based on RSSI values, let $R_A$
and $R_B$ denote the RSSI values received by Alice and Bob. In an OFDM
system, the RSSI takes the form \cite{80216}:
\begin{align*}
R_A & = \! \sum_{\ell=0}^{L-1} |h_{A, \ell}|^2 = \!
\sum_{\ell=0}^{L-1} \! \left[ |\fRe(h_{A, \ell})|^2 \! \! + \!
  |\fIm(h_{A, \ell})|^2 \right] = \! \! \sum_{\ell=0}^{2L-1} \!
X_{A,\ell}^2,
%R_B & = \! \sum_{\ell=0}^{L-1} |h_{B, \ell}|^2 = \! \sum_{\ell=0}^{L-1}
%\! \left[|\fRe(h_{B, \ell})|^2 \!\! + \! |\fIm(h_{B, \ell})|^2\right] =
%\!\! \sum_{\ell=0}^{2L-1} \! X_{B,\ell}^2,
\end{align*}
where 
\begin{align*}
X_{A,\ell} & = \left\{ \begin{array}{lcl} |\fRe(h_{A, \ell})|^2 &
  \mbox{if} & 0 \leq \ell \leq L-1 \\ |\fIm(h_{A, \ell-L})|^2 &
  \mbox{if} & L \leq \ell \leq 2L-1 \end{array} \right.
%X_{B,\ell} & = \left\{ \begin{array}{lcl} |\fRe(h_{B, \ell})|^2 &
 % \mbox{if} & 0 \leq \ell \leq L-1 \\ |\fIm(h_{B, \ell-L})|^2 &
  %\mbox{if} & L \leq \ell \leq 2L-1 \end{array} \right.
\end{align*}
The quantities $R_B$ and $X_{B, \ell}$ are defined similarly. Further, $X_{A,\ell}$ and $X_{B,\ell}$ are $\cN(0, \frac{\sigma^2}{2})$
Gaussian random variables with $\rho_{\tau} = E[X_{A, \ell} X_{B,
    \ell}]$ for all $\ell$. 

Both $R_A$ and $R_B$ are non-standard chi-square distributed random
variables with $2L$ degree of freedom. %However, since $I(R_A; R_B)$
%equals to $I(\alpha R_A; \alpha R_B)$ for any non-zero constant
%$\alpha$, we can normalize $X_{A,\ell}$ and $X_{B,\ell}$ by
%$\sigma/\sqrt{2}$ to get, respectively, $R_A$ and $R_B$ which are
%standard chi-square distributed random variables with $2L$ degree of freedom. 
The joint probability density function of a pair of
chi-square random variables is given in Theorem 2.1 in \cite{Joarder_1} and we use it to numerically calculate the mutual information between $R_A$ and $R_B$, denoting the secret key capacity calculated as:
\begin{eqnarray}
\label{secret_key_capacity_rssi}
C_{R} = \frac{1}{2M} I(R_A; R_B).
\end{eqnarray}
%\begin{theorem} 
 %Two random variables $R_A$ and $R_B$ are said to have a correlated
 %bivariate chi-square distribution, each with $m$ degrees of freedom,
 %if the joint probability density function is given in \cite{Joarder_1}.
%\begin{align}
%f_{R_A R_B}(r_A, r_B)
%&=\frac{2^{-(m+1)}(r_Ar_B)^{(m-2)/2}e^{\frac{-(r_A+
 %     u2)}{2(1-\rho_{\tau}^2)}}}{\sqrt{\pi} \Gamma(m/2)
  %(1-\rho_{\tau}^2)^{m/2}} \nonumber \\ &\cdot~ \sum_{k=0}^\infty
%[1+(-1)^k] \bigg (\frac{\rho_{\tau}
 % \sqrt{r_Ar_B}}{1-\rho_{\tau}^2}\bigg)^k \nonumber \\ &\cdot~
%\frac{\Gamma (\frac {k+1}{2} )}{k! \Gamma (\frac{k+m}{2})},
%\end{align}
%where $m > 2$, $-1 < \rho_{\tau} < 1$, and $\Gamma(\cdot)$ is the
%Gamma function \cite{Joarder_1}.
%\end{theorem}

%The mutual information between a pair of chi-square random variables
%does not have a closed form expression so we calculate it
%numerically. Let $f_{R_A}(r_A)$ and $f_{R_B}(r_B)$ denote the
%respective marginal probability density function of $R_A$ and
%$R_B$. The mutual information between $R_A$ and $R_B$ is:
%\begin{align}
%\label{mutual_rssi}
%I(R_A; R_B) &= \int \int f_{R_A R_B}(r_A, r_B) \nonumber \\ 
%&\cdot~ \log \frac{f_{R_A R_B}(r_A, r_B)}{f_{R_A}(r_A) f_{R_B}(r_B)} dr_A dr_B,
%\end{align} 
%and the secret key capacity is:
%\begin{eqnarray}
%\label{secret_key_capacity_rssi}
%C_{R} = \frac{1}{2M} I(R_A; R_B).
%\end{eqnarray}

When $L$ is large, $R_{A}$ and $R_{B}$ can be well approximated as
Gaussian random variables $\cN(2L, 4L)$ due to the central limit
theorem. The mean and variance for $R_{A}$ and $R_{B}$ can be
calculated as $E[R_A] = \sum_{\ell = 0}^{2L-1} E[X_{A, \ell}^2] $ and
$var(R_A) = \sum_{\ell = 0}^{2L-1} var(X_{A, \ell}^2)$ using the
identities $E[X_{A,\ell}^2] = 1$, $E[X_{A,\ell}^4] = 3$, and
$E[(X_{A,\ell}^2 - E[X_{A,\ell}^2])^2] = 2$, which follow from the
variance normization.  The correlation coefficient between $R_{A}$ and
$R_{B}$ is
\begin{align}
\label{rho}
\rho_{R} &= \frac{E[( R_{A} - 2L) ( R_{B} - 2L )]}{4L} \nonumber \\
&= \frac{E[(X_{A,\ell}^2 - 1) ( X_{B,\ell}^2 - 1)]}{2},
\end{align}
where the joint moment generating function of $X_{A,\ell}$ and $X_{B,\ell}$
is:
\begin{eqnarray}
M(s_1, s_2) = E[e^{s_1X_{A,\ell} + s_2 X_{B,\ell}}] =
e^{[\frac{1}{2}(s_1^2 + 2\rho_{\tau} s_1s_2 + s_2^2)]}.
\end{eqnarray}
We calculate the joint moment $E[X_{A,\ell}^2 X_{B,\ell}^2]$ by taking
second order partial derivatives of $M(s_1, s_2)$ respect to $s_1$ and
$s_2$ and evaluate the result at $s_1 = 0$, $s_2 = 0$. Equation
(\ref{rho}) can be reduced to:
\begin{eqnarray}
\rho_{R} = \rho_{\tau}^2,
\end{eqnarray}  
and the secret key capacity based on RSSI under Gaussian approximation
is then:
\begin{eqnarray}
\label{secret_key_capacity_rssi_gaussian}
C_{R} = \frac{1}{4M} \log \bigg (\frac{1}{1-\rho_{\tau}^4} \bigg ).
\end{eqnarray}

Observe from (\ref{secret_key_capacity_rssi_gaussian}) that the secret
key capacity does not depend on $L$. In other words, at a given
$SNR_{\tau}$, while the capacity between coefficients increases
linearly with $L$ as shown in equation
(\ref{secret_key_capacity_ideal}), the capacity between RSSI stays
the same. This is because there is only one single RSSI value
regardless the number of observations. In Fig.~\ref{fig.mutual_info},
we compare the capacity obtained from channel coefficients and from
RSSI for $L = 2, 5$ and $10$ with $M = 10$. The secret key capacity
between the channel coefficients is calculated using
(\ref{secret_key_capacity_ideal}) and that between RSSI is calculated
both using numerical (\ref{secret_key_capacity_rssi}) and Gaussian
approximation (\ref{secret_key_capacity_rssi_gaussian}).
\begin{figure}[!t] 
  \centerline{\epsfig{figure=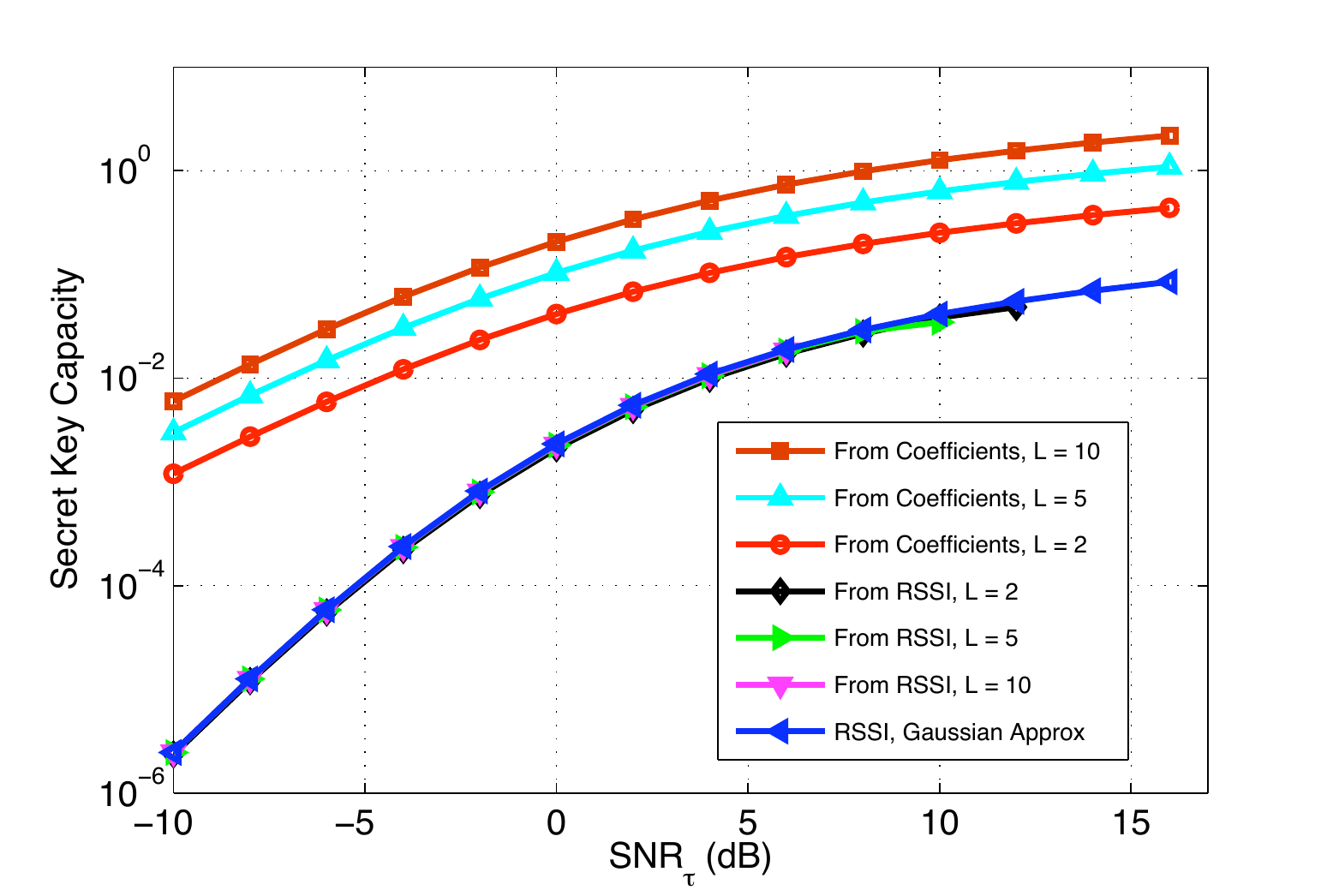,width=9cm}}
  \caption{Secret key capacity when L = 2, 5, and 10. M = 10}
  \label{fig.mutual_info}
\end{figure}
We first note that the secret key capacity obtained from the channel
coefficients increases with $L$, whereas that based on RSSI stays
constant. We also note that the Gaussian approximation is quite accurate, even
when $L$ is rather small.

%%%%%%%%%%%%%%%%%%%%%%%%%%%%%%%%%
\subsection{Representing complex channel coefficients by their real-and-imaginary parts or by their magnitude-and-phase}
\label{sec.ampPhase}

Recall that the secret key capacity~(\ref{eq.capLimit}) is the mutual
information between the sampled channel coefficients observed by Alice
and Bob.  In this section we show this capacity is at least as large
as the sum of the mutual informations between the magnitudes of the
channel coefficients and that between the phases of the channel
coefficients.  This is because while marginally the channel
coefficients observed by Alice and Bob are circularly symmetric (and
thus their magnitude and phase are independent), the correlation
between Alice and Bob's channel coefficients means there is dependence
between Alice's phase and Bob's magnitude and vice-versa.  Thus, the
secret key should not be generated by treating phase and magnitude
separately.  On the other hand, the real parts of Alice and Bob's
coefficients are independent of the imaginary parts.  Thus, without
loss of capacity the key can be generated treating the real and
imaginary parts of each pair of observations as separate pieces of
independent randomness.  This is the reason behind our choice of
definition of $X_A^N$ and $X_B^N$ in~(\ref{eq.stackObs}).

This idea is encapsulated in the following theorem.  For simplicity
(and because the sampled channel coefficients are independent) we focus
on a single pair of observations, $h_A$ and $h_B$.
\begin{theorem} \label{thm.magPhase}
If $h_A$, and $h_B \sim \cCN(0, \sigma^2)$ are jointly complex
Gaussian random variables, we have:
\begin{align}
I(h_A; h_B) &= I(\fRe(h_A); \fRe(h_B)) + I(\fIm(h_A); \fIm(h_B))
\nonumber \\ &\geq I(|h_A|; |h_B|) + I(\Phi_{h_A};
\Phi_{h_B}). \nonumber
\end{align}
\end{theorem}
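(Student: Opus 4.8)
The plan is to establish the equality first and then the inequality. For the equality $I(h_A;h_B) = I(\fRe(h_A);\fRe(h_B)) + I(\fIm(h_A);\fIm(h_B))$, I would use the fact that for a circularly symmetric jointly complex Gaussian pair, the real parts $(\fRe(h_A),\fRe(h_B))$ and the imaginary parts $(\fIm(h_A),\fIm(h_B))$ are independent of one another, and within each pair the cross-correlation is $\rho_\tau$ (this was noted just before the theorem). Hence $(\fRe(h_A),\fIm(h_A))$ and $(\fRe(h_B),\fIm(h_B))$ decompose into two independent parallel Gaussian sub-channels, and mutual information over independent parallel channels adds. Concretely, since $\fIm(h_A),\fIm(h_B)$ are independent of $\fRe(h_A),\fRe(h_B)$, one has $I(h_A;h_B) = I(\fRe(h_A),\fIm(h_A);\fRe(h_B),\fIm(h_B)) = I(\fRe(h_A);\fRe(h_B)) + I(\fIm(h_A);\fIm(h_B))$ by the chain rule together with the independence, and each term equals $-\tfrac12\log(1-\rho_\tau^2)$.

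For the inequality, the key observation is that the map $(\fRe(h_A),\fIm(h_A)) \mapsto (|h_A|,\Phi_{h_A})$ is a bijection (off a measure-zero set), and likewise for $h_B$, so $I(h_A;h_B) = I(|h_A|,\Phi_{h_A}; |h_B|,\Phi_{h_B})$. Then I would expand this joint mutual information by the chain rule, e.g.
\begin{align*}
I(|h_A|,\Phi_{h_A}; |h_B|,\Phi_{h_B}) &= I(|h_A|;|h_B|,\Phi_{h_B}) + I(\Phi_{h_A};|h_B|,\Phi_{h_B} \mid |h_A|) \\
&\geq I(|h_A|;|h_B|) + I(\Phi_{h_A};\Phi_{h_B} \mid |h_A|),
\end{align*}
where the inequality drops the extra conditioning arguments using that mutual information to a larger collection is at least that to a sub-collection. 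The remaining task is to show $I(\Phi_{h_A};\Phi_{h_B}\mid |h_A|) \geq I(\Phi_{h_A};\Phi_{h_B})$, or more directly to organize the chain rule so that the desired two-term lower bound falls out cleanly. The cleanest route is probably: $I(|h_A|,\Phi_{h_A};|h_B|,\Phi_{h_B}) \geq I(|h_A|;|h_B|) + I(\Phi_{h_A};\Phi_{h_B})$ because adding the phase (resp.\ magnitude) coordinate to one side can only increase the mutual information with the magnitude (resp.\ phase) coordinate on the other side, and because marginally $|h_A|\perp\Phi_{h_A}$ and $|h_B|\perp\Phi_{h_B}$ (circular symmetry) the ``no-interaction'' baseline is exactly the sum of the two marginal mutual informations.

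The step I expect to be the main obstacle is making the last inequality rigorous without hand-waving: one must carefully exploit the marginal independence $|h_A|\perp\Phi_{h_A}$ and $|h_B|\perp\Phi_{h_B}$ to argue that the sum $I(|h_A|;|h_B|)+I(\Phi_{h_A};\Phi_{h_B})$ is a lower bound rather than an upper bound on the joint mutual information. A clean way is to write $I(|h_A|,\Phi_{h_A};|h_B|,\Phi_{h_B})$ as a relative entropy $D(p_{\mathrm{joint}} \,\|\, p_{|h_A|,|h_B|}\,p_{\Phi_{h_A},\Phi_{h_B}})$ — legitimate because the product of these two joint marginals has the correct individual marginals for all four variables, thanks to circular symmetry — and then apply a data-processing / chain-rule-for-relative-entropy argument to peel off $I(|h_A|;|h_B|)$ and $I(\Phi_{h_A};\Phi_{h_B})$ as two nonnegative pieces, leaving a nonnegative remainder. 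The intuition to emphasize in the write-up is that the slack is precisely the cross-dependence between Alice's phase and Bob's magnitude (and vice versa) induced by the correlation, which is the point the surrounding text makes about why keys should not be generated from magnitude and phase separately.
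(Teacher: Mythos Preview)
Your chain-rule approach for the inequality is essentially the paper's. The paper isolates the one step you flag as ``the main obstacle'' as a standalone lemma: if $Z$ is independent of $X$ (or of $Y$), then $I(X;Y\mid Z)\ge I(X;Y)$, with equality iff $Z\perp (X,Y)$. Applying this with $Z=|h_A|$ and $X=\Phi_{h_A}$ (circular symmetry gives $|h_A|\perp\Phi_{h_A}$) immediately yields your needed $I(\Phi_{h_A};\Phi_{h_B}\mid |h_A|)\ge I(\Phi_{h_A};\Phi_{h_B})$. The paper just unrolls both sides symmetrically, picking up the nonnegative cross terms $I(|h_A|;\Phi_{h_B})$ and $I(\Phi_{h_A};|h_B|)$ before discarding them, whereas you drop them implicitly; the mechanism is the same. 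The equality part also matches: the paper simply says the joint density factors as the product of the $(\fRe(h_A),\fRe(h_B))$ and $(\fIm(h_A),\fIm(h_B))$ densities.

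Your proposed ``cleaner'' relative-entropy route, however, is misstated. The equality you write,
\[
I(|h_A|,\Phi_{h_A};|h_B|,\Phi_{h_B})\;=\;D\!\left(p_{\mathrm{joint}}\,\big\|\,p_{|h_A|,|h_B|}\,p_{\Phi_{h_A},\Phi_{h_B}}\right),
\]
is false: the right-hand side is $I(|h_A|,|h_B|;\Phi_{h_A},\Phi_{h_B})$, the mutual information between the magnitude pair and the phase pair, not $I(h_A;h_B)$. Having ``the correct individual marginals'' is not sufficient for that identification. What \emph{does} work is to use circular symmetry to write $p_{h_A}p_{h_B}=p_{|h_A|}p_{\Phi_{h_A}}p_{|h_B|}p_{\Phi_{h_B}}$, so that
\[
I(h_A;h_B)=D\!\left(p_{\mathrm{joint}}\,\big\|\,p_{|h_A|}p_{\Phi_{h_A}}p_{|h_B|}p_{\Phi_{h_B}}\right)
= I(|h_A|;|h_B|)+I(\Phi_{h_A};\Phi_{h_B})+I(|h_A|,|h_B|;\Phi_{h_A},\Phi_{h_B}),
\]
the last term being the nonnegative remainder. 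That is a legitimate alternative to the paper's lemma-based argument, and it makes the ``slack'' you describe (cross-dependence between one user's magnitude and the other's phase) appear as a single identifiable term; but it is not the identity you wrote.
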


\begin{proof}
See Appendix~\ref{app.magPhaseProof}.
\end{proof}

In Figure~\ref{fig.mi_comp} we illustrate this result for a range of
SNR.  We plot the capacity $I(h_A; h_B)$, $I(|h_A|; |h_B|) +
I(\Phi_{h_A}; \Phi_{h_B})$, as well as the two terms of the latter,
$I(\Phi_{h_A}; \Phi_{h_B})$ and $I(|h_A|; |h_B|)$. The gap to capacity
is evident at all SNR. It is also worthwhile to note that most of the
information is in the phase information, $I(|h_A|; |h_B|)$ is much
smaller.  This is another illustration of the lesson of
Sec.~\ref{sec.RSSI} as the magnitude information is the RSSI of this
example.  We reiterate that the reason for the gap to capacity is that
the pairs $(|h_A|, |h_B|)$ is not statistically independent of
$(\Phi_{h_A}, \Phi_{h_B})$ due to the correlation between real and
imaginary parts of $h_A$ and $h_B$.

\begin{figure}[!t] 
  \centerline{\epsfig{figure=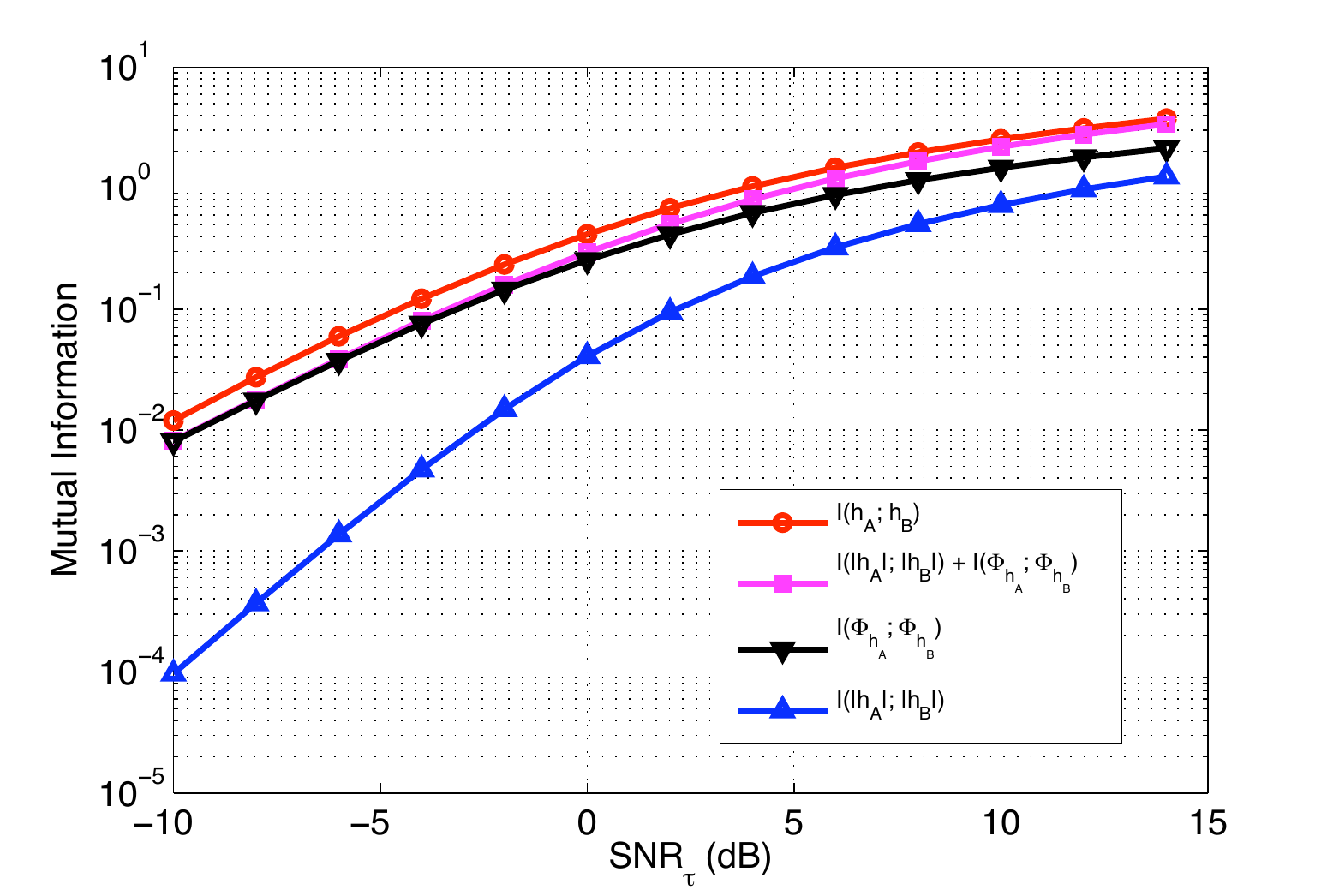,width=9cm}}
  \caption{Comparison of the secret key capacity to the sum of the
    mutual information between the magnitude of the observations and
    that between the phases.}
  \label{fig.mi_comp}
\end{figure}

\section{Design and Algorithms}
\label{sec.implement}

In this section we describe a key reconciliation system
based on low-density parity-check (LDPC) codes.  The basic idea behind
our design is the following.  Alice and Bob have correlated
observation $X_A^N$ and $X_B^N$, cf.~(\ref{eq.stackObs}), and shared knowledge of
a LDPC code.  First, Alice makes a quantized version $X_{A,Q}^N$ of
her observation $X_A^N$.  Generally $X_{A,Q}^N$ will {\em not} be a
codeword of the LDPC code, but it will always be an element of some
coset of the code.  She determines this coset by calculating the {\em
  syndrome} of her observation, which she sends to Bob.  Thus, the
syndrome is the public message in~(\ref{eq.syndromeGen}).  By itself the
syndrome reveals little information about the source since there are
so many sequences in the coset.  This is why this construction
satisfies the secrecy condition of~(\ref{eq.secrecyCond}).  However, with
knowledge of the coset and of his observation $X_B^N$, Bob can
``de-noise'' $X_B^N$ to recover $X_{A,Q}^N$ by decoding the LDPC code
with respect to the known coset in which $X_{A,Q}^N$ lies while
treating $X_B^N$ as a noisy observation of $X_{A,Q}^N$.  All cosets
inherit the distance properties of the LDPC code, which gives the
needed robustness to the random differences between $X_A^N$ and
$X_B^N$.  It should be noted that this is a well-understood method for
tackling these problems, see, e.g., \cite{Draper_SW}.  Our contribution is
really the prototyping of this system for the random source of interest
(wireless channels) and some design for
non-binary quantization.

In Sec.~\ref{sec.LDPCbackground} we provide some background on LDPC
codes and how they fit into the key generation framework of
Def.~\ref{def.secretKeySys}.  In
Sec.~\ref{sec.nonBinaryImplementation} we describe the algorithm
implemented for non-binary (four-level) quantization.

\subsection{Background, setup, and secrecy analysis}
\label{sec.LDPCbackground}

A length-$N$ rate-$R$ LDPC code over $GF(q)$ is characterized by its
$m \times N$ parity check matrix $\bP$ with elements drawn from
$GF(q)$ where $R = (1 - m/N) \log_2(q)$ bits per channel use. The
parity check matrix of a LDPC code is low-density in the sense
that the number of non-zero elements of each row is upper bounded by
some constant, regardless of the block-length $N$.  Thus, most
elements of $\bP$ are zero.  A \textit{regular} LDPC code has a
constant row-weight (number of non-zero elements) and a constant
column-weight.  An \textit{irregular} LDPC
code has a set of row and column weights, where the fraction of each
is specified by a degree distribution polynomial.

In producing $X_{A,Q}^N$ Alice has a choice of quantization.  In our
design she performs scalar quantization, quantizing each
element of $X_A^N$ independently.  Further, we study two quantization
alphabets: the first where each of the $N$ elements of $X_{A,Q}^N$ is
binary, the second where each is quaternary.  Bob may choose also to
quantize his observations prior to decoding, but there will be a loss
in information (and thus performance) if he does so.  If Bob also
quantizes his observations, he is said to perform ``hard'' decoding,
while if he does not he is said to perform ``soft'' decoding.

Alice creates her public message by multiplying her observation
$X_{A,Q}^N$ with $\bP$ to produce the length-$m$ syndrome $S^m$, $S^m = \bP X_{A,Q}^N,$ where
$m = N[1-R/\log_2(q)]$.
Within each coset there are 
$2^{N \log_2(q) (1-m/N)} = 2^{NR} \label{eq.binSize}$
sequences.  As long as $NR < I(X_{A;Q}^N; X_{B}^N)$ then recovery of
$X_{A,Q}^N$ (decoding) will be reliable.  It can be shown that, 
$H(X^N_{A,Q}|S^m)$, the uncertainty in $X_{A,Q}^N$ given the knowledge of
the public message $S^m$ is at least $NR$.  %Bounding this conditional entropy depends
%on the statistics of the source sequence.  In the analysis below we
%make the assumption that $X_{A,Q}^N$ is an i.i.d.\ sequence.  Our
%system is well modeled as such since the sampled channel coefficients
%$h_{\ell}$ are approximately independent and will be identically
%distributed after quantization, if the quantization is sufficiently
%course.
%\begin{align}
%H(X_{A,Q}^N|S^m) & = H(S^m | X_{A,Q}^N) + H(X_{A,Q}^N) - H(S^m)
%\nonumber\\& = N H(X_{A,Q}) -
%H(S^m) \nonumber\\ & \geq N H(X_{A,Q}) - m
%H(S) \label{eq.inequality}\\  &
%= NR \label{eq.uncertainty}
%\end{align}
%where the inequality in~(\ref{eq.inequality}) follows from basic
%properties of entropy.
This means that
the uncertainty in $X_{A,Q}^N$ given knowledge of the public message
$S^m$ is exactly the same as the size of the coset.  Thus, if the key
extraction function $f(\cdot)$ first quantizes $X_A^N$ to get
$X_{A,Q}^N$ and then sets the key to be equal to the index that
identifies $X_{A,Q}^N$ within the coset of the LDPC code in which it
lies, the mutual information of this index with $S^m$
will be arbitrarily small, satisfying the secrecy
condition~(\ref{eq.secrecyCond}).  Finally, since $R < (1/N) I(X_{A,Q}^N;
X_B^N)$ for successful recovery, we get the upper bound on the
achievable secrecy rate approaches~(\ref{eq.defCap}) as the
quantization gets increasingly fine.

\subsection{Design based on non-binary LDPC codes}
\label{sec.nonBinaryImplementation}

In Sec.~\ref{sec.codingSims} we present simulation results for two key generation systems based upon LDPC codes.  In the first
Alice uses binary quantization to produce $X_{A,Q}^N$ and in the
second she uses four-level quantization.  In this section we describe
the design only for the four-level (non-binary) quantization
as the one for binary quantization is based upon standard
LDPC decoding techniques.

%We construct regular LDPC codes using a code generating engine
%selected from \cite{emin} and we build irregular LDPC codes using
%conventional density evolution technique which is widely used in
%literature such as \cite{Richardson}. We also design the decoding
%algorithm of a non-binary LDPC code in $GF(4)$ based on belief
%propagation (BP). We briefly explain the BP decoding of LDPC in the
%following section.

To simplify notation, in this section we use $x_i$ to represent the
$i$th element of Alice's quantized observation $X_{A,Q}^N$ and $y_i$
to represent the $i$th element of Bob's (not necessarily quantized)
observation $X_B^N$.  As our discussion is for four-level
design, $x_i \in \{0,1,2,3\}$.  Similarly $s_i$ is the $i$th
element of $S^m$, also in $\{0,1,2,3\}$ as are all elements of $\bP$.
To simplify the design, rather than working with a code with
elements in $GF(4)$ we split each $x^N$ into bit planes, representing
each $x_i$ by a pair of binary symbols $x_{i,M}$ and $x_{i,L}$, each
taking the respective value in $\{0,1\}$ that satisfies
\begin{equation}
x_i = x_{i,L} + 2 x_{i,M}. \label{eq.binaryToFourary}
\end{equation}
We can now apply a pair of length-$N$ binary LDPC code, one to each
bit plane, or a length-$2N$ binary LDPC code to the concatenation of
the bit planes.  We implemented both and, while the latter generally
has the higher performance (though not by too much), the former allows
more flexibility (e.g., in reconstructing only the most significant
bit plane or sequential reconstruction) and is slightly simpler to
implement.  We choose to present our results on the former, using
${\cal C}_{\alpha}$, ${\bP}_{\alpha}$, and $s^{m_{\alpha}}_{\alpha}$
to represent, respectively, the code, the parity check matrix, and the
syndrome associated with $x_{\alpha}^N$ -- the sequence of concatenated $x_{i, \alpha}$
where $\alpha \in \{L,M\}$.
We note that ${\cal C}_M$ and ${\cal C}_L$ need not be the same rate
so $m_M \neq m_L$ in general.  However, in all our simulations we
choose $\bP_{M} = \bP_L$, where equality is element-wise, so ${\cal
  C}_M = {\cal C}_L$.  Recall that the syndrome is calculated as
$s^{m_{\alpha}}_{\alpha} = {\bP}_{\alpha} x_{\alpha}^N$.

To visualize the two binary LDPC codes and to see how to relate them
to the observation $y_i$ we depict the constraints involved in the key
generation process in Fig.~\ref{fig.factor_graph} using a factor
graph~\cite{Kschischang}.  The factor nodes $F_i$ constrain the
triplet of variables $(x_i, x_{i,L}, x_{i,M})$ to satisfy the
relationship of~(\ref{eq.binaryToFourary}).  In particular,
\begin{equation*}
F_i(x_i, x_{i, M}, x_{i, L}) = \left \{ \begin{array}{lll} 1, &
  \mbox{if} & x_i = x_{i, L} + 2 x_{i, M}\\ 0, &
  \mbox{else} \end{array} \right.
\end{equation*}

\begin{figure}[!t] 
  \centerline{\epsfig{figure=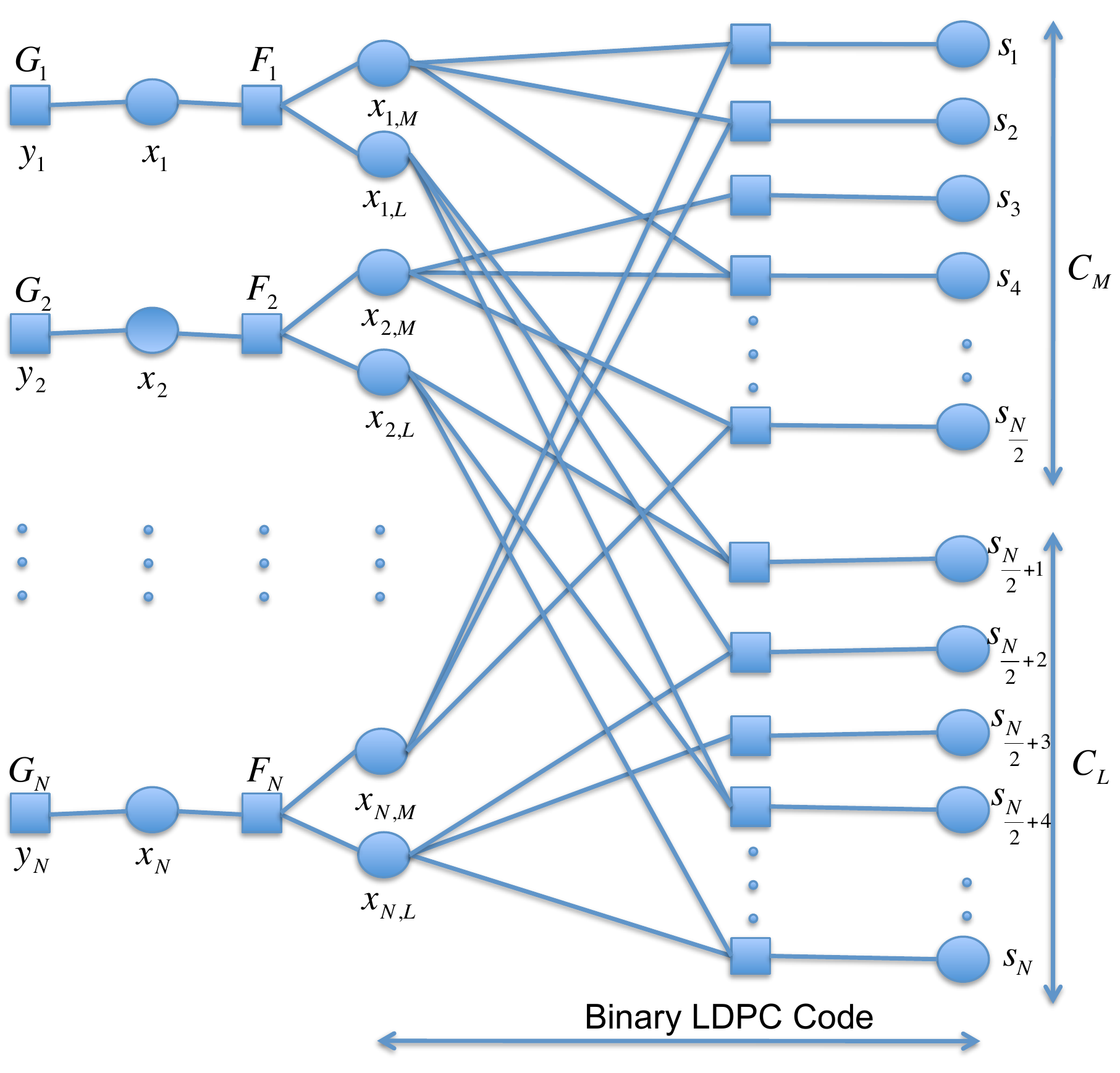,width=8cm}}
  \caption{Factor graph of 4-ary LDPC codes. Nodes $x_{i, M}$ and
    $x_{i, L}$ connect to the check nodes of different binary LDPC
    code. They are connected by local function $F_i$ which regulates
    them according to the value of $x_i$.}
  \label{fig.factor_graph}
\end{figure}

We attempt to recover $X_{A,Q}^N$, based on knowledge of $y^N$ and
$S_M^{m_M}$ and $S_{L}^{m_L}$, by using the sum-product algorithm.  In
this algorithm messages that approximate conditional probability
distributions are iteratively passed along the edges of the factor
graph.  We use the parallel message passing schedule, have all factor
nodes send messages to all variable nodes, and then vice-versa;
continuing until either the messages converge or some maximum number
of iterations is reached.  For more details of these standard aspects
of the implementation see~\cite{Kschischang}.  In the remainder of this
section we indicate the form of the message updates required for the
non-binary case.

We use the following symbols to represent the messages passed.  The
message sent from node $x_i$ to node $F_i$ and from node $F_i$ to
$x_i$ are, respectively, denoted $\mu_{x_i \rightarrow F_i}(x_i)$ and
$\mu_{F_i \rightarrow x_i}(x_i)$.  We use $\cN(p)$ to denote the
neighbors of a given node $p$. The summary operator $\sum_{\sim
  \{p\}}$ means summation over all variables \textit{except} $p$ and
the notation $\cN(p) \backslash \{q\}$ means the set of neighbors of
$p$ \textit{except} $q$. We now detail the form the general
sum-product update rules specialized for our problem.

The message passed from variable node $x_i$ to $F_i$ are calculated
as
\begin{align}
\mu_{x_i \rightarrow F_i}(x_i) &= \prod_{t \in \cN(x_i) \backslash
  \{F_i\}} \mu_{t \rightarrow x_i} (x_i) \nonumber \\ &= \mu_{y_i
  \rightarrow x_i}(x_i). \nonumber
\end{align}

As there is no marginalization at variable node $x_i$, the message
passed to $F_i$ is the same as the incoming message from $y_i$,
\begin{eqnarray}
\mu_{y_i \rightarrow x_i}(x_i) = G_i(x_i), \nonumber
\end{eqnarray}
where the local function $G_i(x_i)$ is the channel evidence
\begin{equation*}
G_i(x_i) = p_{X_i | Y_i}(x_i | y_i).
\end{equation*}

The messages passed from factor node $F_i$ to each of the two binary
variable nodes $x_{i, \alpha}$ where $\alpha \in \{M,L\}$ is
calculated as
\begin{align*}
 &\mu_{F_i \rightarrow x_{i, \alpha}} (x_{i, \alpha}) \\
 &= \frac{1}{Z}
 \!\!\sum_{\sim \{x_{i,\alpha}\}} \!\! \bigg ( F_i(x_i, x_{i, M}, x_{i,
   L}) \nonumber \!\!\!\!\!\!\!\! \prod_{t \in \cN(F_i) \backslash \{x_{i,
     \alpha}\}} \!\!\!\!\!\!\!\! \mu_{t \rightarrow F_i}(t) \bigg ),
\end{align*}
where $Z$ is a normalization factor, 
\begin{align*}
Z &= \sum_{x_{i, \alpha}} \sum_{\sim \{x_{i,\alpha}\}} \!\! \bigg ( F_i(x_i, x_{i, M}, x_{i,
   L}) \nonumber \!\!\!\!\!\!\!\! \prod_{t \in \cN(F_i) \backslash \{x_{i,
     \alpha}\}} \!\!\!\!\!\!\!\! \mu_{t \rightarrow F_i}(t) \bigg ).
\end{align*}

For example:
\begin{align*}
 &\mu_{F_i \rightarrow x_{i, M}}(1) \\
 &= \frac{1}{Z}\bigg (\mu_{x_i \rightarrow F_i}(2)
\mu_{x_{i, L} \rightarrow F_i}(0) +\mu_{x_i \rightarrow
  F_i}(3)  \mu_{x_{i, L} \rightarrow F_i}(1) \bigg ),\\
&\mu_{F_i \rightarrow x_{i, M}}(0) \\
&= \frac{1}{Z}\bigg ( \mu_{x_i \rightarrow F_i}(1)
\mu_{x_{i, L} \rightarrow F_i}(1) +\mu_{x_i \rightarrow
  F_i}(0)  \mu_{x_{i, L} \rightarrow F_i}(0) \bigg ),
\end{align*}
where $Z$ is given by
\begin{align*}
Z &= \mu_{x_i \rightarrow F_i}(2)
\mu_{x_{i, L} \rightarrow F_i}(0) +\mu_{x_i \rightarrow
  F_i}(3)\mu_{x_{i, L} \rightarrow F_i}(1) \\
  &+ \mu_{x_i \rightarrow F_i}(1)
\mu_{x_{i, L} \rightarrow F_i}(1) +\mu_{x_i \rightarrow
  F_i}(0)\mu_{x_{i, L} \rightarrow F_i}(0).
\end{align*}

The log-likelihood for $x_{i, M}$ is $\log \bigg (\frac{ \mu_{F_i
    \rightarrow x_{i, M}}(0) }{ \mu_{F_i \rightarrow x_{i, M}}(1)}
\bigg )$ which serves as the channel evidence for the binary code
${\cal C}_M$. Similarly, variable node $x_{i, L}$ calculates the
channel evidence for ${\cal C}_L$. Based on these messages, the
messages passed in the LDPC codes are standard messages where the
$s_{\alpha}^{m_{\alpha}}$ make sure that the decoding is performed
with respect to the correct cosets.  This aspect is the same as when
LDPC codes are used in Slepian-Wolf distributed source coding
problems, e.g., see~\cite{Slepian}.  The messages passed from the LDPC
codes back to the $F_i$ are $\mu_{x_{i, M} \rightarrow F_i}(x_{i, M})$
and $\mu_{x_{i, L} \rightarrow F_i}(x_{i, L})$.

Finally, the message passed from $F_i$ to variable node $x_i$ is
calculated as
\begin{align*}
&\mu_{F_i \rightarrow x_i}(x_i) \\
&= \frac{1}{Z} \sum_{\sim \{ x_i \}} \bigg (
  F_i(x_i, x_{i, M}, x_{i, L}) \!\!\! \prod_{t \in \cN(F_i) \backslash \{ x_i
    \}} \mu_{t \rightarrow F_i}(t) \bigg) \\ &= \frac{1}{Z} \!\! \sum_{\sim
    \{ x_i \}} \!\!\! \bigg ( F_i(x_i, x_{i, M}, x_{i, L}) \mu_{x_{i, M}
    \rightarrow F_i}(x_{i, M}) \mu_{x_{i, L} \rightarrow F_i}(x_{i,
    L}) \bigg), \nonumber
\end{align*}
where $Z$ is the corresponding normalization factor. For example:
\begin{equation*}
\mu_{F_i \rightarrow x_i}(2) = \frac{1}{Z}\bigg ( \mu_{x_{i, M} \rightarrow F_i}(1)
\mu_{x_{i,L} \rightarrow F_i}(0) \bigg ),
\end{equation*}
where $Z$ is
\begin{align*}
Z  &= \mu_{x_{i, M} \rightarrow F_i}(0)
\mu_{x_{i,L} \rightarrow F_i}(0) + \mu_{x_{i, M} \rightarrow F_i}(1)
\mu_{x_{i,L} \rightarrow F_i}(0) \\
&+ \mu_{x_{i, M} \rightarrow F_i}(0)
\mu_{x_{i,L} \rightarrow F_i}(1) + \mu_{x_{i, M} \rightarrow F_i}(1)
\mu_{x_{i,L} \rightarrow F_i}(1).
\end{align*}

Eventually, the messages either converge or the maximum iteration
count is reached.  In our simulations we set this maximum to $50$
iterations.  When the messages converge the marginals are computed as the following
up to a scaling factor. 
\begin{align*}
\Pr [x_i = a] &\propto \mu_{y_i \rightarrow x_i}(a) \mu_{F_i \rightarrow x_i}(a),
\end{align*}
where $a \in \{0, 1, 2, 3\}$. The algorithm sets its estimates symbol-by-symbol as
$\hat{x}_i = \arg \max_{a} \Pr[x_i = a].$

We now present the initialization of our algorithm. $GF(4)$ variable
nodes $x_i$ are initialized as:
\begin{eqnarray}
\mu_{y_i \rightarrow x_i}(x_i) = G_i(x_i), \nonumber
\end{eqnarray}
whereas $GF(2)$ variable nodes $x_{i, M}$ are initialized as:
\begin{align*}
&\mu_{F_i \rightarrow x_{i, M}} (x_{i, M}) \\
&= \frac{1}{Z} \sum_{\sim
  \{x_{i,M}\}} \!\!\! \bigg ( \! F_i(x_i, x_{i, M}, x_{i, L}) \!
\!\!\!\!\!\!\!\!\! \prod_{t \in \cN(F_i) \backslash \{x_{i, M}\}}
\!\!\!\!\!\!\!\!\!\!\! \mu_{t \rightarrow F_i}(t) \bigg ).
\end{align*}
At this step the values $x_{i, L}$ can all be set to have equal 
probability as this is the best estimate $x_{i, L}$ can be set to initially. In other words, 
the foregoing equation reduces to:
\begin{align*}
&\mu_{F_i \rightarrow x_{i, M}} (x_{i, M}) \\
&= \frac{1}{Z} \sum_{\sim \{x_{i,M}\}}
\bigg ( F_i(x_i, x_{i, M}, x_{i, L}) \mu_{x_i \rightarrow F_i}(x_i) \frac{1}{2}
\bigg ).
\end{align*}
Therefore, the message initializing $x_{i, M}$ is $\mu_{F_i
  \rightarrow x_{i, M}}(1) = \frac{1}{Z} \bigg( \frac{1}{2}\mu_{x_i \rightarrow F_i}(2)
  +\frac{1}{2}\mu_{x_i  \rightarrow F_i}(3) \bigg )$. Its corresponding log-likelihood is $\log
\bigg ( \frac{\mu_{F_i \rightarrow x_{i, M}}(0)}{\mu_{F_i
    \rightarrow x_{i, M}}(1)} \bigg )$ which serves as the initial
channel evidence in $GF(2)$ for ${\cal C}_M$. The message initializing
$x_{i,L}$ can be similarly derived which serves as the initial channel
evidence in $GF(2)$ for ${\cal C}_L$.

%We can also apply a length-$2N$ binary LDPC code to the concatenation of
%the bit planes and we implemented this alternative form of 4-ary LDPC
%code. In this form, instead of connecting to two separate binary LDPC
%codes, $x_{i, M}$ bits and $x_{i, L}$ bits all connect to the same
%binary LDPC code as shown in Fig.~\ref{fig.factor_graph2}. 

%\begin{figure}[!t] 
 % \centerline{\epsfig{figure=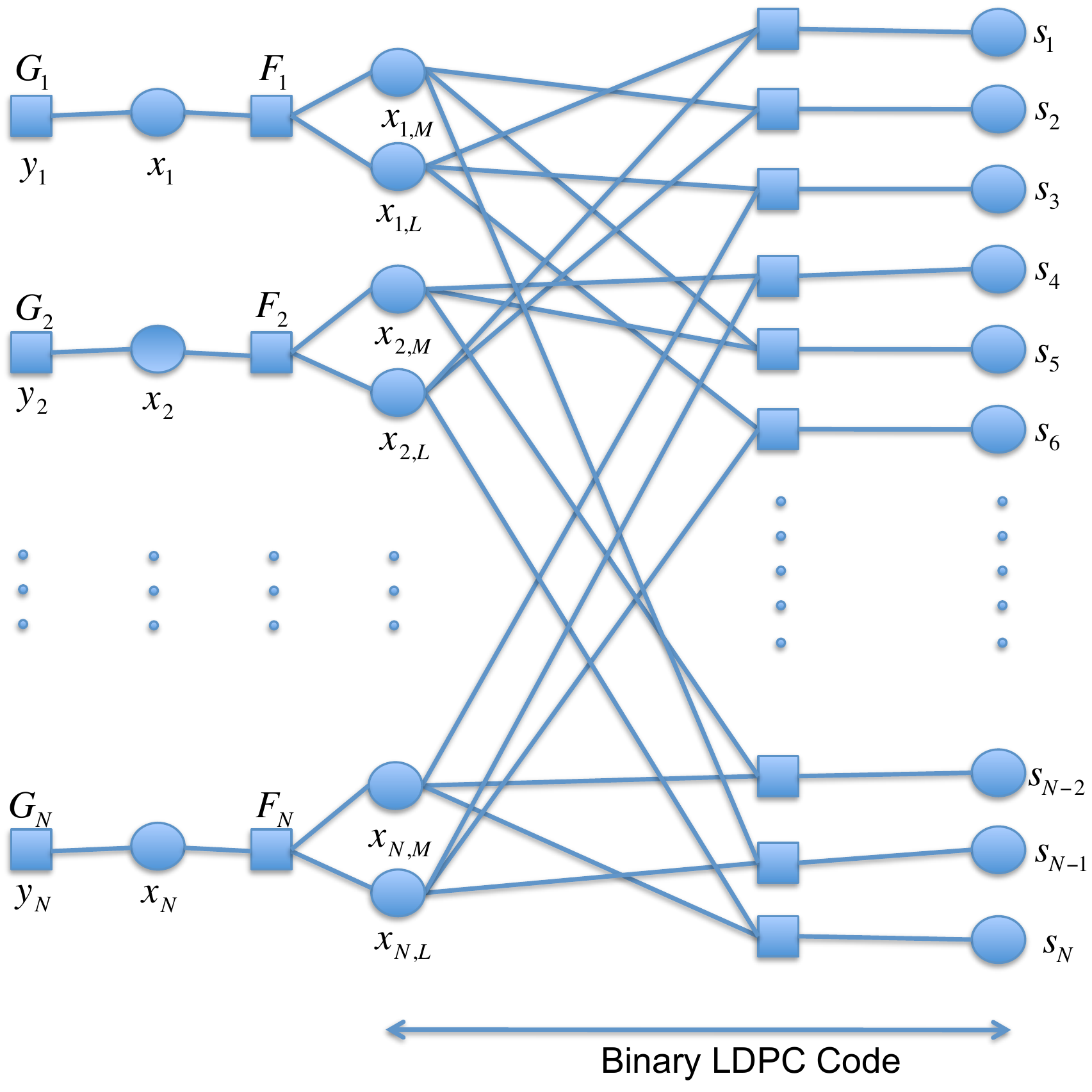,width=9.2cm}}
  %\caption{An alternative form of the factor graph of 4-ary LDPC
   % Codes. Both $x_{i, M}$ and $x_{i, L}$ connect to the check nodes
    %of the same binary LDPC code. They are connected by local function
   % $F_i$ which regulates them according to the value of $x_i$.}
  %\label{fig.factor_graph2}
%\end{figure} 

\subsection{Phase Offset Estimation}

As discussed in Sec.~\ref{sec.PhaseNoise}, the phase offset during two-way channel training will degrade the quality of the channel measurement. Therefore, one needs to implement phase offset suppression techniques such as phase estimation \cite{WuBarness, Liang1, WuBarness2}. In this section, we present a novel approach that incorporates the estimation of phase offset into the design of reconciliation process. 

For the ease of presentation, we assume the phase offset is constant across multiple channel trainings. Our idea can easily be extended to the situation where phase offset is time varying. We propose a joint phase offset estimation and reconciliation process, formulated as:
\begin{align}
\label{eq.est}
( \hat{x}^N, ~ \hat{\theta'} ) = \!\!\!\!\!\!\! \operatorname*{arg\,max}_{\{x^N, \theta' \!~|~\! \bP x^N = s^m,  \theta' \in [0, 2\pi] \}} \!\!\!\!\!\!\!\! p_{X^N | Y^N } (x^N e^{j \theta'} | y^N e^{j \theta}).
\end{align}
Incorporating the task of phase offset estimation as in (\ref{eq.est}) into the reconciliation process puts an extra burden on the codes. To support phase offset estimation, the code rate should be lower. This reduces the cardinality of the coset that specified by the syndrome as is illustrated in Fig.~\ref{fig.coset}. The lower code rate means a lower secrecy rate, reduced by the loss in (\ref{eq.haha}). Fig.~\ref{fig.coset} depicts the coupling between this lowered rate and the joint decoding problem in (\ref{eq.est}). If the code rate is lower (a larger syndrome is used as the public message), then for a given optimal $\theta'$ parameter in (\ref{eq.est}) there will be fewer coset elements $x^N$ such that $\bP x^N = s^m$ that yields a high probability (right oval in Fig.~\ref{fig.coset}). If the original code rate is used, then there will be many high probability coset elements (left oval in Fig.~\ref{fig.coset}) and the decoding will be erroneous with high probability.  

\begin{figure}[!t] 
  \centerline{\epsfig{figure=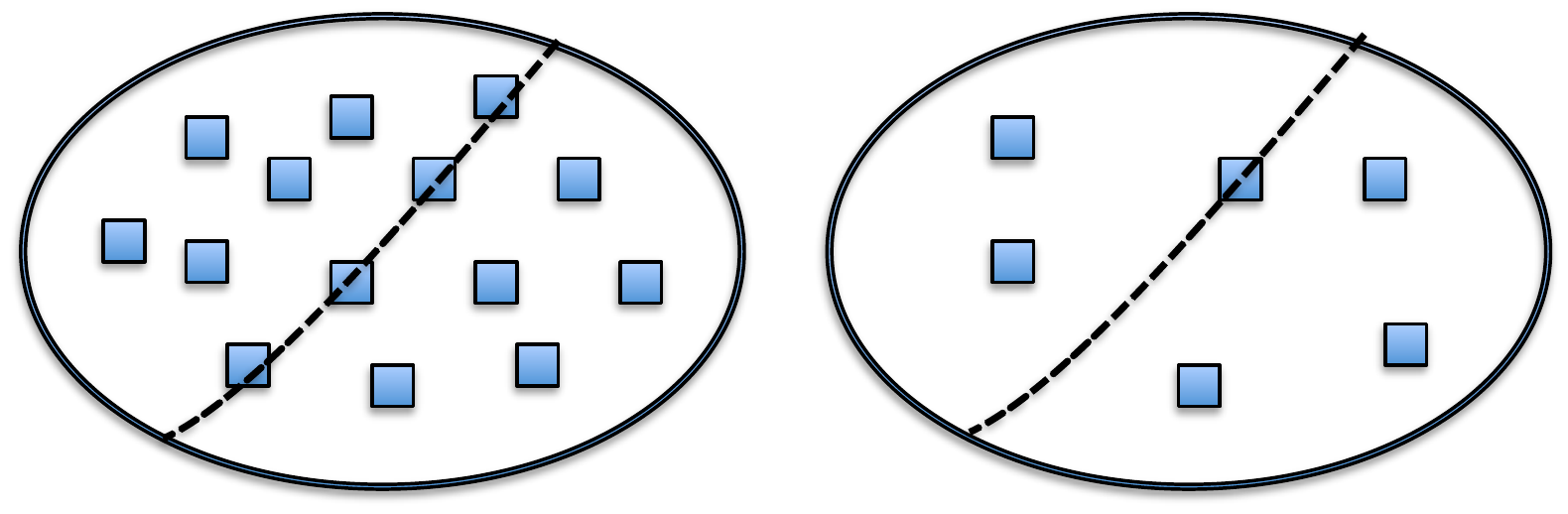,width=8cm}}
  \caption{Cosets of different LDPC code rates are shown as ovals. The one on the left corresponds to higher code rate. It contains many candidate codewords thus (\ref{eq.est}) may have non-unique solutions. The oval on the right is sparser, corresponding to a lower code rate. Thus it is possible to have a unique solution to (\ref{eq.est}). Phase rotation $\theta'$ in (\ref{eq.est}) is represented as the dashed line. }
  \label{fig.coset}
\end{figure}

While the above discussion indicates a generic approach, we now show how to integrate this search into our message passing algorithm. We propose a joint phase offset estimation and reconciliation procedure by concatenating an extra variable node $\theta$ to all the check nodes $G_i, i = 1,2,\ldots, N$, where $\theta$ denotes the random phase offset between Alice and Bob. We assume the $\theta$ is discretized such that it can only take some finite values. Then by passing message to and from variable node $\theta$, one could obtain the estimate of $\theta$. The algorithm works as follows. 

The message passed from $\theta$ to $G_i$ is
\begin{align}
\mu_{\theta \rightarrow G_i} (\theta) = \prod_{t \in \cN(\theta) \backslash G_i} \mu_{t \rightarrow \theta} (\theta).
\end{align}
The message passed from $G_i$ to $x_i$ is 
\begin{align}
\mu_{G_i \rightarrow x_i} (x_i) = \frac{1}{Z} \sum_{\theta} \bigg ( p_{X_i \theta | Y_i} (x_i, \theta | y_i) \mu_{\theta \rightarrow G_i} (\theta) \bigg ),
\end{align}
where $Z$ is some normalization factor. The message passed from $x_i$ to $G_i$ is
\begin{align}
\mu_{x_i \rightarrow G_i}(x_i) = \mu_{F_i \rightarrow x_i}(x_i).
\end{align}
Finally, the message passed back to $\theta$ is
\begin{align}
\mu_{G_i \rightarrow \theta} (\theta) = \frac{1}{Z} \sum_{x_i} \bigg ( p_{X_i, \theta | Y_i}(x_i, \theta | y_i) \mu_{x_i \rightarrow G_i} (x_i) \bigg ),
\end{align}
where $Z$ is some constant. To initialize the algorithm, one can choose the uniform distribution over all the values the $\theta$ variable can take. 

Our design extends to the cases where phase offset varies across multiple channel trainings. One can concatenate multiple $\theta$ variables, each connecting to all the check nodes $G_i$ that belong to the same channel training. The actual implementation of this algorithm is left as a future work.

\section{Simulation results}
\label{sec.results}

In this section we provide simulation results and discussion for our proposed secret key 
generation system.

\subsection{OFDM simulation results}
We first show the simulation result of an IEEE 802.11a channel. We
simulate the frequency and sampled channel coefficients and their
correlation matrices. Then we numerically 
compute the empirical secret key capacity between Alice and Bob based on our simulated time domain channel coefficients under different channel
environment.

In Table.~\ref{tb.table} we list some channel parameters for a typical rich multipath OFDM 802.11a environment \cite{80211}. Secret key capacity simulated at a particular $SNR_f$ is also listed for a  quick reference. When coherence time is small, the secret key capacity becomes large as new randomness is supplied at higher rate. However, the relationship between secret key capacity and the degree of freedom $L \approx \lceil \tau_{\max} W \rceil$ is more complicated and depends on the operating $SNR_f$. While the scaling is roughly linear in delay spread and bandwidth there are second order effects that makes the relationship more complicated. This is illustrated in Sec.~\ref{sec.mutual_info_sim}.

\small
\begin{table}[!t]
	\renewcommand{\arraystretch}{1.1}
	\caption{Channel parameters and secret key capacity}	
	\label{tb.table}
	\centering
	\begin{tabular} {| c || c |}
	\hline No. of Tones ($M$) & $52$ \\ \hline Total Bandwidth
        & $20$ MHz \\ \hline Total Data Bandwidth ($W$) & 16.25 MHz \\
        \hline Signal Duration ($T$) & $3.2$
        $\mu$s \\ \hline Carrier Frequency Spacing ($\Delta f$) &
        $312.5$ kHz \\ \hline Center Carrier Frequency ($F$) & $5.18$
        GHz \\ \hline
        Coherence Time & $100 $ ms \\ \hline
        Max Delay Spread ($\tau_{\max}$) & $800$ ns \\ \hline
        Typical Indoor Delay Spread & $40$ ns - $1$ $\mu$s \\ \hline
        Typical Outdoor Delay Spread & $1$ $\mu$s - $200$ $\mu$s \\ \hline
        Secret Key Capacity ($C$) at $20$ dB & $1040$ bits/sec \\ \hline 
	\end{tabular} \\
\end{table}

\normalsize

\subsubsection{Channel coefficients simulation}
\label{sec.channel_coeff_sim}

We consider $N_p = 300$ transmission paths and assume the $52$ tones all have
the same $SNR_f$, cf.~(\ref{freq_snr}). For simplicity, we choose the
maximum delay spread $\tau_{\max}$ to be $800$ ns so that the degree
of freedom (DoF) $L \approx \lceil \tau_{\max} W \rceil = 13$. We
reduce the redundancy in the $M = 52$ frequency domain channel
coefficients by transforming them into $13$
independent sampled channel coefficients. Over 
each coherence time, we let $\tau_k$ be drawn uniformly from 
$0$ to $\tau_{\max}$ and $\beta_k$ are independent Gaussian
random variables whose variances are related to the drawn $\tau_k$ through the exponential power-delay profile.
We generate $10^6$ independent realizations of such channel and construct the
contour plots of empirical correlation matrices of frequency domain channel
coefficients and sampled channel coefficients as shown in
Fig.~\ref{fig:total_plot}.

\begin{figure}[!t]
\centering
\subfigure[Correlation matrix of frequency channel coefficients]{
\includegraphics[scale=0.5]{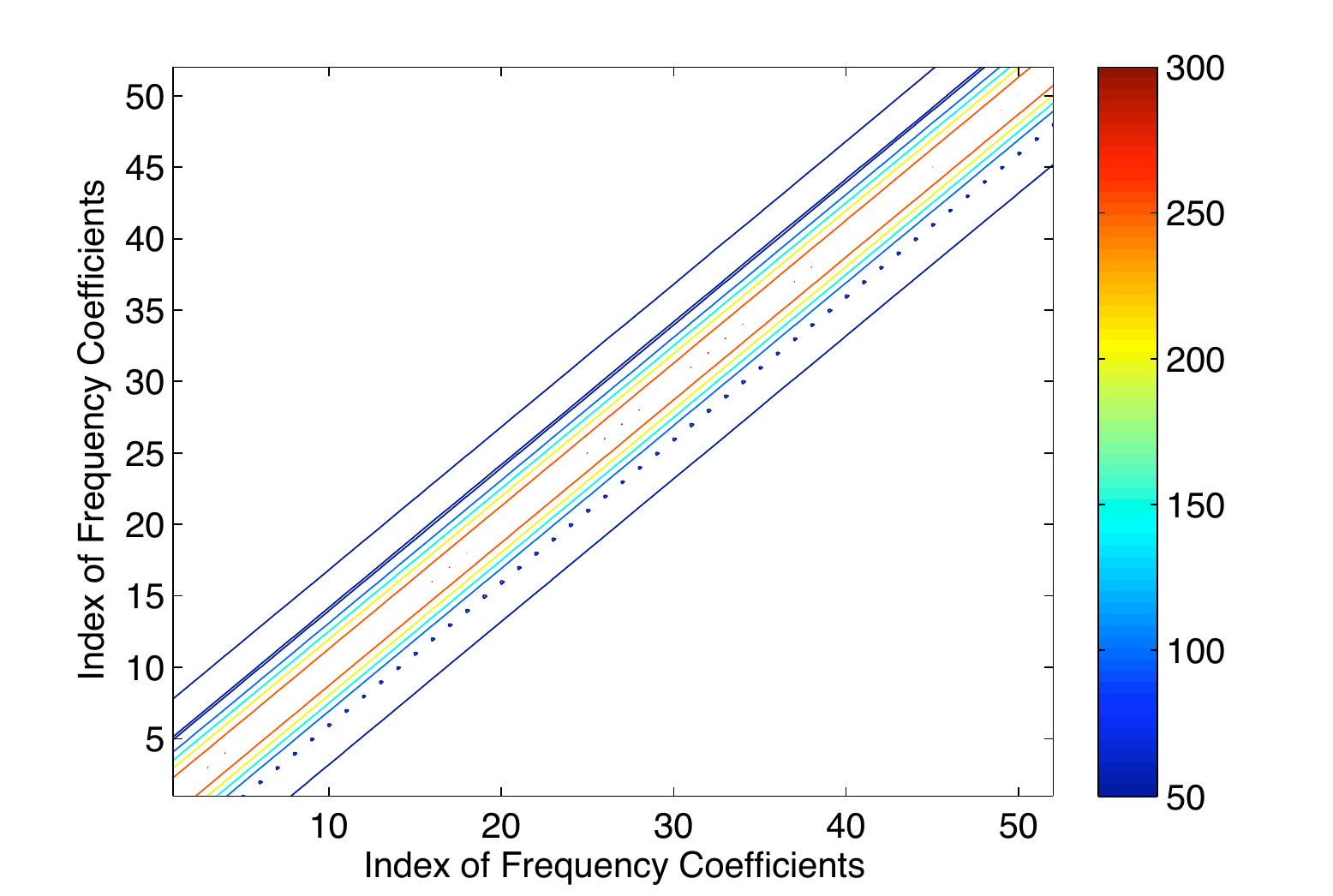}
\label{fig:subfig1}
} \subfigure[Correlation matrix of sampled channel coefficients]{
  \includegraphics[scale=0.5]{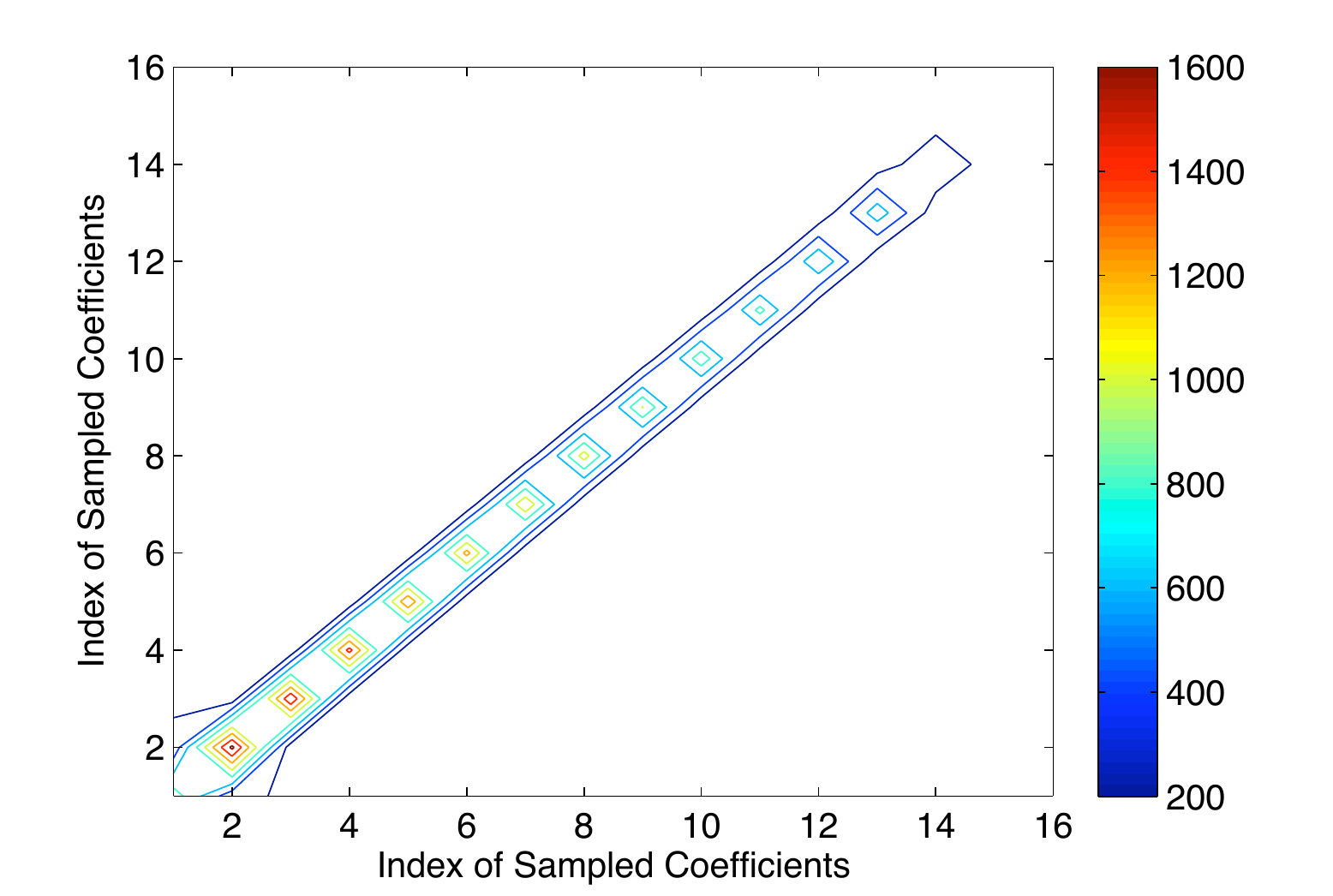}
\label{fig:subfig2}
}
%\subfigure[Variances of sampled channel coefficients]{
%\includegraphics[scale=0.5]{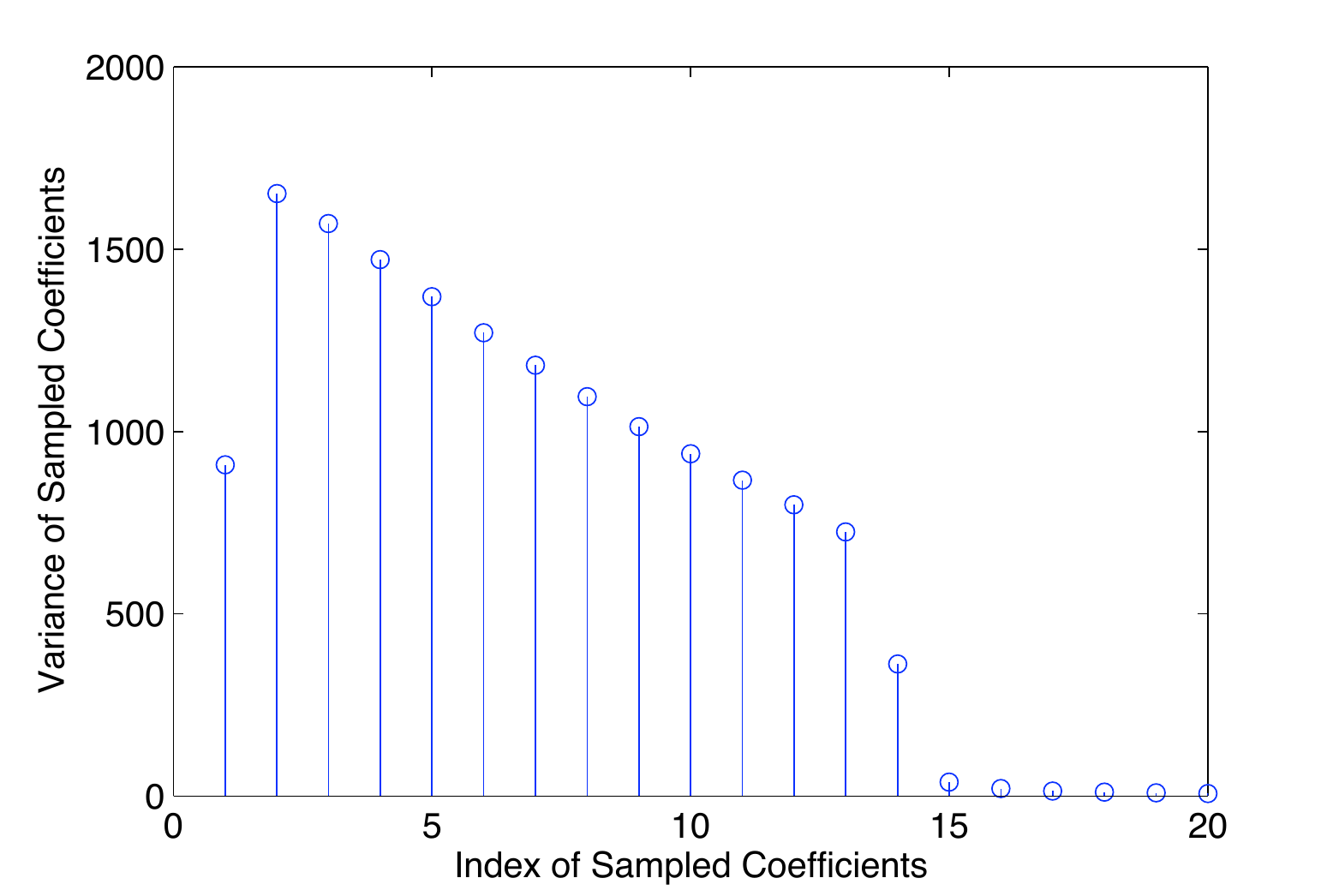}
%\label{fig:subfig3}
%}
%\subfigure[Eigenvalues of Correlation Matrix of Freq Channel Coefficients]{
%\includegraphics[scale=0.19]{Figure/eig.pdf}
%\label{fig:subfig4}
%}
\caption{OFDM channel coefficients simulation. Note that $13$ sampled
  channel coefficients are decorrelated from $52$ frequency domain
  channel coefficients. Note that sampled channel
  coefficients do not have the same variance.}
\label{fig:total_plot}

\end{figure}

\subsubsection{Secret key capacity simulation}
\label{sec.mutual_info_sim}
%We can characterize the secret key capacity between Alice and Bob in
%two different ways: one from the sampled channel coefficients using
%(\ref{secret_key_capacity}) and one from the frequency domain channel
%coefficients. We first calculate the capacity from the frequency domain channel
%coefficients by performing
%eigenvalue decomposition (EVD) on the correlation matrix to
%decorrelate the frequency domain coefficients. Note that this requires
%the actual statistics of frequency domain coefficients which may not
%be obtainable in a real-world implementation. Fig.~\ref{fig.3}
%provides the secret key capacity drawn from frequency domain channel
%coefficients for $L = 1, 13$ and $52$, plotted versus $SNR_f$.
%\begin{figure}[!t] 
 % \centerline{\epsfig{figure=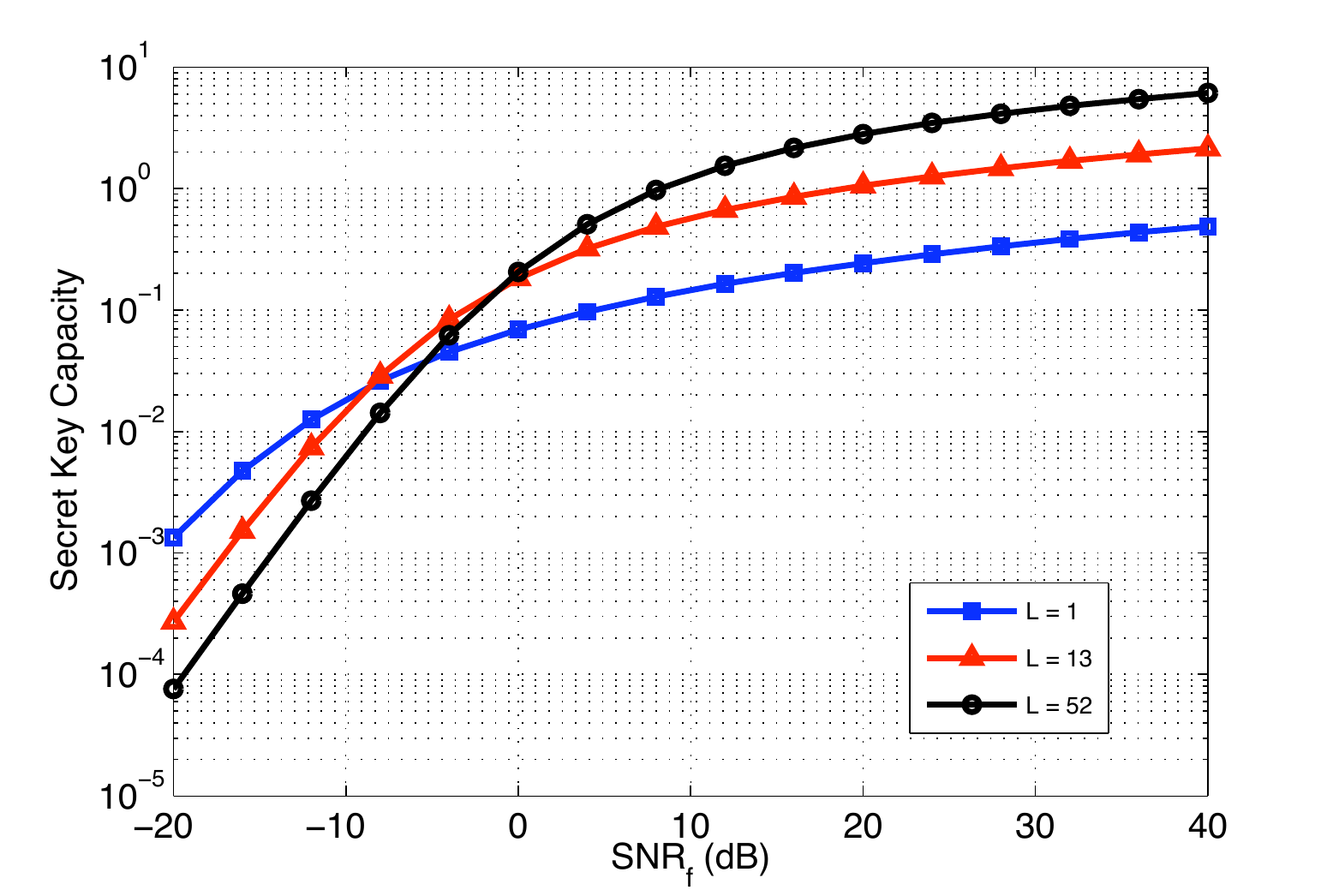,width=9cm}}
  %\caption{Secret key capacity from frequency domain channel coefficients}
  %\label{fig.3}
%\end{figure} 
Secret key capacity can be computed from the first $L$ nonzero
sampled channel coefficients using (\ref{secret_key_capacity}). We
plot in Fig.~\ref{fig.2} the secret key capacity calculated from
sampled channel coefficients. Note that the secret key capacity
calculated using sampled channel coefficients is an approximation. As we see from Fig.~\ref{fig.2}, the total number of bits we can obtain per coherence time is as large as $1\times 52 \times 2 = 104$ bits at $20$ dB. It is $1040$ bits per second when coherence time is $100$ ms. 

\begin{figure}[!t] 
  \centerline{\epsfig{figure=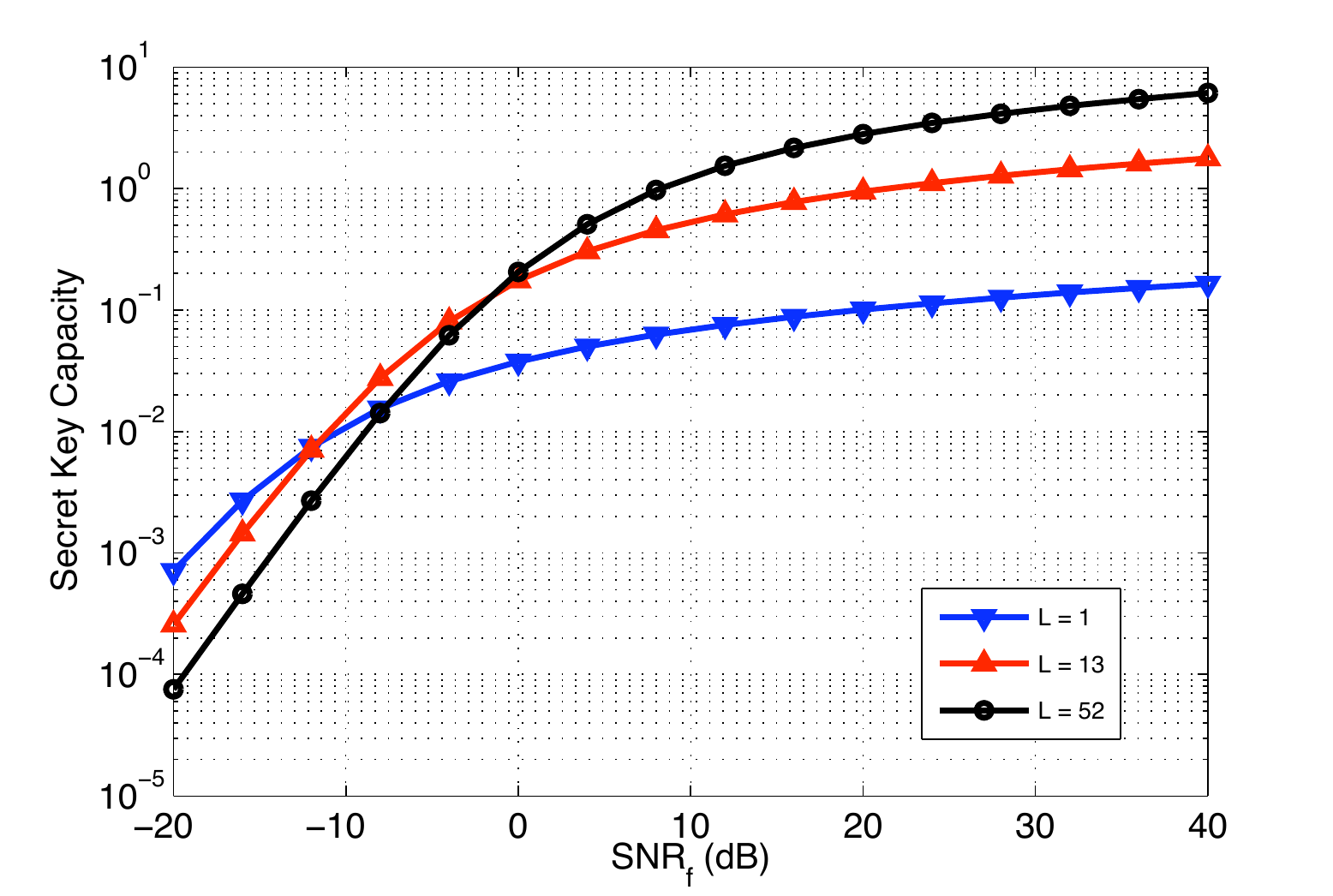,width=9cm}}
  \caption{Secret key capacity of sampled channel coefficients}
  \label{fig.2}
\end{figure}

Simulation in Fig.~\ref{fig.2} suggest that there
is no single optimal OFDM channel which has the best secret key
capacity under any $SNR_f$: under low $SNR_f$, one would like the channel to possess
fewer degree of freedom; under high $SNR_f$, one would like the channel to have
more degree of freedom. This is analogous to \cite{Chou2} where the authors 
observe that there is a trade-off between the power per degree of freedom and 
the number of degree of freedom. The intuition is also related to \cite{Verdu} 
where it is shown that peaky signal is capacity achieving input to an AWGN fading channel. 

One can also compute secret key capacity from frequency domain channel coefficients. Due to page limit constraint we omit the results. 

\subsection{LDPC Performance}
\label{sec.codingSims}

In this subsection we simulate the performance of our error correcting code. 
We allow Alice and Bob to perform multiple channel trainings. There are two ways we simulate the error correction process to reconcile
Alice and Bob's measured channel coefficients. If Bob quantizes his channel
coefficients, we term the reconciliation a \textit{hard decoding} process. On the
other hand, if Bob keeps his unquantized coefficients such that $\cY =
\bR$, we term it a \textit{soft decoding} process. In soft decoding, 
the decoder has access to Bob's full unquantized channel
coefficients which improves decoding performance. 

We let Alice and Bob perform $n =
30$ independent channel trainings yielding a block length of $N = 30 \times 52
\times 2 = 3120$. One benefit of using a large block length is that LDPC
code performs better under longer block lengths. We generate 
$400$ independent realizations of such $30$ trainings and aggregate the secret key bit error
from the channel trainings of each. For each LDPC code rate, we plot its 
corresponding SNR which yields approximately $10^{-3}$
secret key bit error rate. The number of realization is sufficient as $400 \times 3120$ is on
the order of $10^6$ which suffices to assess system performance at bit error rates of $10^{-3}$.

We simulate the performance of our error correcting code using the
sampled channel coefficients we simulated in
Section~\ref{sec.channel_coeff_sim} with $L = 13$. We connect our LDPC
simulation with the secret key capacity in
Section~\ref{sec.mutual_info_sim} by putting them in the same plot. We
plot the capacity when $L = 13$ and the performance of the binary and
non-binary (4-ary) LDPC code in Fig.~\ref{fig.ldpc_together}.
\begin{figure}[!t] 
  \centerline{\epsfig{figure=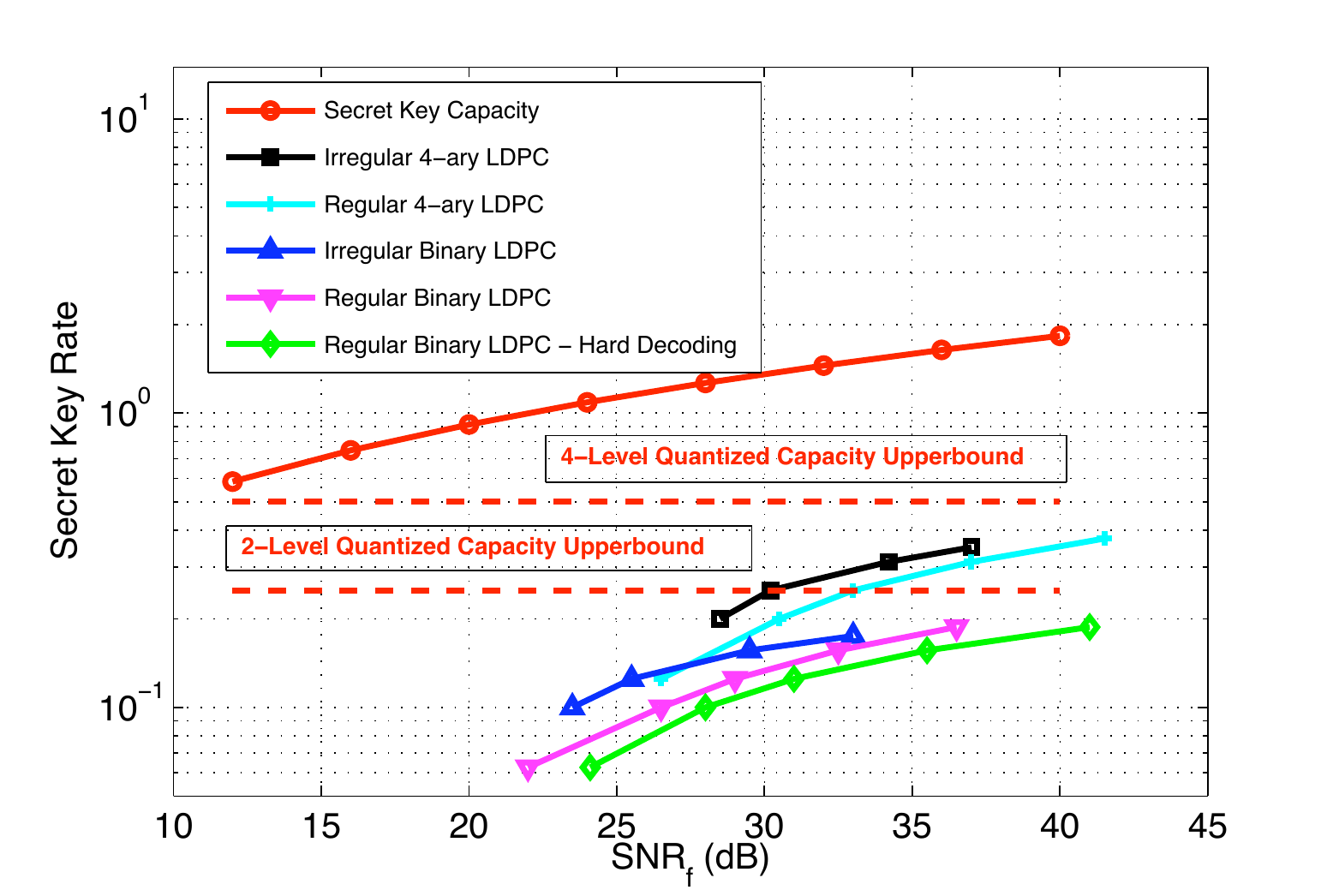,width=9cm}}
  \caption{LDPC performance}
  \label{fig.ldpc_together}
\end{figure}
The irregular LDPC codes are constructed using density evolution
technique \cite{Richardson}. We first note that our decoding
performance is improved by using soft decoding and it is further
improved by using irregular LDPC codes. Non-binary LDPC further
improves the performance and approaches the capacity at high $SNR_f$
region. LDPC codes with rate below $0.25$ are not simulated as low
code rate means less secrecy.

%\input{discussion}

% conclusion
\section{Conclusion and future work}
\label{sec.conclusion}

We study channel randomness and propose a practical system that generates secret keys from 
observing the channel randomness. We
investigate the secret key capacity shared by two end users and
find that secret key generation based on CSI is superior to the key generation based
on RSSI. This is because the CSI-based method has the larger secret key
capacity. We suggest that modern receiver circuitry should make CSI
accessible to upper layer applications. We prove that it
is always preferable to use the real and imaginary parts of the
sampled channel coefficients, as opposed to using magnitude and phase separately. Our simulation show that it is
feasible to base key generation on sampled channel coefficients. Finally, we 
implement the key generation system based both on regular and irregular LDPC codes.

\section{Acknowledgement}
We would like to thank Vincent Y.~F.~Tan, Tzu-Han Chou, Jing Yang and Xishuo Liu for their continuous discussion and support. We would also like to thank anonymous reviewers for their feedback and suggestion.

\appendix
\subsection{Proof of Theorem~\ref{thm.magPhase}}
\label{app.magPhaseProof}

We first show the following lemma.
\begin{lemma}
\label{lm.lm}
Let $X$, $Y$ and $Z$ be random variables. If $Z$ is independent \textit{either} of $X$ or of $Y$ or both, then 
\begin{equation}
I(X; Y|Z) \geq I(X; Y), \nonumber
\end{equation}
\end{lemma}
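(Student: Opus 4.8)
The plan is to apply the chain rule for mutual information in its two forms and then use non-negativity of (conditional) mutual information, together with the symmetry $I(X;Y) = I(Y;X)$ to handle the two symmetric cases in the hypothesis.

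First I would expand $I(X; Y, Z)$ in the two standard ways:
\begin{align}
I(X; Y, Z) &= I(X; Z) + I(X; Y \mid Z) \nonumber \\
&= I(X; Y) + I(X; Z \mid Y). \nonumber
\end{align}
Equating the two right-hand sides gives
\begin{equation}
I(X; Y \mid Z) = I(X; Y) + I(X; Z \mid Y) - I(X; Z). \nonumber
\end{equation}
Now suppose $Z$ is independent of $X$. Then $I(X; Z) = 0$, and since $I(X; Z \mid Y) \geq 0$ always holds, we conclude $I(X; Y \mid Z) \geq I(X; Y)$, as desired. If instead $Z$ is independent of $Y$, I would interchange the roles of $X$ and $Y$: since $I(X; Y \mid Z) = I(Y; X \mid Z)$ and $I(X;Y) = I(Y;X)$, the same argument with $Y$ in place of $X$ gives $I(Y; X \mid Z) = I(Y; X) + I(Y; Z \mid X) - I(Y; Z) \geq I(Y; X)$ because $I(Y;Z) = 0$. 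The case where $Z$ is independent of both is covered a fortiori by either argument.

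There is no real obstacle here; the only thing to be careful about is not to over-claim — the inequality can be strict, and it genuinely requires the independence hypothesis (without it, conditioning can either increase or decrease mutual information). I would state the three defining properties used (the two chain-rule decompositions and non-negativity of conditional mutual information) and note that the "either/or" in the hypothesis is dispatched purely by the symmetry of mutual information in its two arguments, so only one of the two computations needs to be written out in full.
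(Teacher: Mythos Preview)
Your proof is correct. The paper's argument is a close cousin but uses a slightly different decomposition: assuming $Z$ is independent of $X$, it writes
\[
I(X;Y\mid Z) = \cH(X\mid Z) - \cH(X\mid Y,Z) = \cH(X) - \cH(X\mid Y,Z) \geq \cH(X) - \cH(X\mid Y) = I(X;Y),
\]
invoking the independence to replace $\cH(X\mid Z)$ by $\cH(X)$ and then the ``conditioning reduces entropy'' inequality. Your route stays entirely at the level of mutual information, using the two chain-rule expansions of $I(X;Y,Z)$ and non-negativity of $I(X;Z\mid Y)$. The two arguments are essentially dual: the paper's entropy inequality $\cH(X\mid Y,Z)\leq \cH(X\mid Y)$ is exactly the statement $I(X;Z\mid Y)\geq 0$ that you invoke. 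Your version has the minor advantage of making the residual term $I(X;Z\mid Y)$ explicit, which also identifies the equality case ($Z$ independent of $(X,Y)$) without extra work; the paper states that equality condition separately.
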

where equality holds if and only if $Z$ is independent of $(X, Y)$. 
\begin{proof}
Suppose $Z$ is independent of $X$. Follow the definition of mutual information, we have the following,
\begin{align*}
I(X; Y|Z) &= \cH(X) - \cH(X|Y, Z) \\
&\geq \cH(X) - \cH(X|Y) \\
&= I(X; Y),
\end{align*}
where $\cH(\cdot)$ denotes the differential entropy. Equality holds if $Z$ is independent of $(X, Y)$.
\end{proof} 

We now prove the theorem. 
\begin{proof}
We first prove the inequality in the theorem
\begin{align*}
&I(h_A; h_B) \\
&= I(|h_A|, e^{j\phi_A}; h_B) \\
&\stackrel{(a)}{=} I(|h_A|; h_B) + I(e^{j\phi_A}; h_B|\, |h_A|)  \\
&\stackrel{(b)}{\geq}  I(|h_A|; h_B) + I(e^{j\phi_A}; h_B) \\ 
&= I(|h_A|;|h_B|, e^{j\phi_B}) 
 + I(e^{j\phi_A};|h_B|, e^{j\phi_B}) \\ 
&\stackrel{(c)}{=}  I(|h_A|;|h_B|) + I(|h_A|; e^{j\phi_B}|\, |h_B|)  \\
&\quad + I(e^{j\phi_A};|h_B|)+ I(e^{j\phi_A}; e^{j\phi_B} | \, |h_B|)  \\ 
&\stackrel{(d)}{\geq}  I(|h_A|;|h_B|) + I(|h_A|; e^{j\phi_B} )  \\
&\quad + I(e^{j\phi_A};|h_B|)+ I(e^{j\phi_A}; e^{j\phi_B}) \\
&= I(|h_A|;|h_B|) + I(|h_A|;\phi_B )+ I(\phi_A;|h_B|)
 + I(\phi_A;\phi_B)  \\
&\stackrel{(e)}{\geq}  I(|h_A|;|h_B|)  + I(\phi_A;\phi_B), 
\end{align*}
where $(a)$ and $(c)$ follow from the chain rule of mutual information, $(b)$ follows because $|h_A|$ is independent of $\phi_A$ (cf.~Lemma~\ref{lm.lm}), $(d)$ follows because $|h_B|$ is independent of $\phi_B$  and $(e)$ follows because mutual information is non-negative. This proves the inequality in the theorem. 

The first equality in the theorem is proved by showing that the density function of $(h_A, h_B)$ can be factored into the product of density functions of $(\fRe(h_A), \fRe(h_B))$ and $(\fIm(h_A), \fIm(h_B))$. 
\end{proof}

\bibliographystyle{ieeetr}
\bibliography{./myrefs}

\end{document}